%
%
%
%
%
\RequirePackage{fix-cm}
\documentclass[smallextended]{svjour3}       
\smartqed  
\usepackage{graphicx}
%
%
\usepackage{amsmath}
\usepackage{amssymb}
%
%
\journalname{}

\begin{document}

\title{Unchecked strategy diversification and collapse in continuous voluntary public good games 
}

\titlerunning{Continuous voluntary public good games}        

\author{Tatsuya Sasaki         
\and
         {\AA}ke Br{\"a}nnstr{\"o}m
\and
         Isamu Okada
\and
         Tatsuo Unemi
}


\institute{T. Sasaki \at
	Facluty of Mathematics, University of Vienna, 1090 Vienna, Austria \\
	Evolution and Ecology Program, International Institute for Applied Systems Analysis, 2361 Laxenburg, Austria \\
	Graduate School of Engineering, Soka University, 192-8577 Tokyo, Japan \\
           Tel.: +43-1-4277-50774\\
           \email{tatsuya.sasak@univie.ac.at}
        \and
	{\AA}. Br{\"a}nnstr{\"o}m \at
           Department of Mathematics and Mathematical Statistics, University of Ume{\aa}, 90187 Ume{\aa}, Sweden \\     
           Evolution and Ecology Program, International Institute for Applied Systems Analysis, 2361 Laxenburg, Austria
        \and
	I. Okada \at
           Department of Business Administration, Soka University, 192-8577 Tokyo, Japan \\
        \and
	T. Unemi \at
           Department of Information Systems Science, Soka University, 192-8577 Tokyo, Japan \\
}

\date{Received: date / Accepted: date}

\maketitle

\begin{abstract}
Cooperation or defection and participation or withdrawal are well-known options of behavior in game-like activities in free societies, yet the co-evolutionary dynamics of these behavioral traits in the individual level are not well understood. 
Here we investigate the continuous voluntary public good game, in which individuals have two types of continuous-valued options: a probability of joining the public good game and a level of cooperative investment in the game. 
Our numerical results reveal hitherto unreported phenomena: 
(i) The evolutionary dynamics are initially characterized by oscillations in individual cooperation and participation levels, in contrast to the population-level oscillations that have previously been reported. 
(ii) Eventually, the population's average cooperation and participation levels converge to and stabilize at a center. 
(iii) Then, a most peculiar phenomenon unfolds: The strategies present in the population diversify and give rise to a ``cloud'' of tinkering individuals who each tries out a different strategy, and this process continues unchecked as long as the population's cooperation and participation levels remain balanced.
Over time, however, imbalances build up as a consequence of random drift and there is a sudden and abrupt collapse of the strategy-diversity cloud. The process then repeats again in a cyclic manner. 
To understand the three aforementioned phenomena, we investigate the system analytically using adaptive-dynamics techniques. Our analysis casts light on the mechanisms which underpin the unexpected and surprising evolutionary dynamics.

\keywords{social dilemmas \and evolution of cooperation \and voluntary participation \and adaptive dynamics \and diversification}
\subclass{91A06 \and 91A22 \and 91A40}
\end{abstract}

\section{Introduction}

The reasons for the emergence and maintenance of cooperative behavior is an enduring puzzle in biology and the social sciences \cite{Tri71,WiSo94,Har68,Ost90}. The nature of this puzzle is often characterized as follows: groups of cooperators outperform groups of defectors, but defectors always outperform cooperators in any mixed group \cite{Dawes}. This represents a social-dilemma situation: individual interests and the communal benefit are incompatible. Many theoretical and experimental investigations of cooperative behavior have employed the framework of the public goods game \cite{Bin94,Cha11}. Typical public good games can be described as follows: cooperators in an $N$-player group with $N \ge 2$ invest the same amount $c>0$ to a public pool and defectors do nothing at all; then, the total amount of investment is multiplied by a factor $r>1$ and distributed equally among all members of the group despite the preceding different investment levels \cite{Sig10}. With $r<N$, it follows that the marginal profit $c(r/N-1)$ is negative and thus each player is better off defecting than cooperating, irrespective of the other players’ decisions, i.e., defection dominates cooperation. Cooperators are thus evolutionarily doomed by defectors.
This is in contrast to the case $r>N$ in which each player is better off cooperating and, consequently, no social dilemma exists. 

Here we focus on the effects of voluntary participation (or exit) on the evolution of cooperation \cite{Hir70,OrbDaw93,BatKit95,Hayashi}. In modern societies individuals often have a great deal of freedom and anonymity. This allows an individual to get away with not only free-riding, but also opting-out types of behavior. Voluntary participation appears to be the simplest mechanism for promoting cooperation that can be justified as an \textit{a priori} option even under complete anonymity. This mechanism has been studied in the standard voluntary public good game consists of three pure strategies: cooperation, defection, and nonparticipation. 

The latter strategy is used by players who opt out of public good games and instead constantly earn a payoff $\sigma>0$ \cite{HauEtal02a,HauEtal02b}. If successful strategies are assumed to increase in frequency, for example through imitation dynamics, a ``rock-scissors-paper''-type rotational change in dominating strategy arises among the three pure strategies.  Cooperation can therefore be maintained through population-level oscillations in the relative frequencies of these three pure strategies.

The standard voluntary public goods game assumes that each player has one of three pure behavioral strategies \cite{HauEtal02a,HauEtal02b}. In many situations, it is more plausible to assume that players differ continuously in cooperative investments and participation probability, but the consequences of this assumption has thus far not been explored. Here, we investigate a continuous voluntary public goods game in which individuals are able to make continuously varying degrees of investment levels \cite{wahl1,wahl2,killingback3,killingback1,sherratt,Doe04,Bro08,Ake,CreEtal11,Deng11,Zhang,Par13} and also alter their participation rates \cite{SasEtal07,Chen08}. Surprisingly, we find that the emerging cooperative dynamics are very different from those of the standard public good game and in particular involve a hitherto unreported phase of strategy expansion and collapse. By drawing on both evolutionary game theory and the theory of adaptive dynamics, we can analytically understand and explain almost all of the unfolding phenomena, thus revealing important insights for the evolution of cooperation. 

This paper is organized as follows. 
In Sect. 2, we describe a continuous voluntary public good game and provide an individual-based model of the gradual evolution of cooperation and participation in public good games.
In Sect. 3, we conduct numerical simulations of the individual-based model which demonstrate cyclic oscillation, convergence to  the center, growth of ``cloud'' and its collapse.
In Sect. 4, we then compare those results with the theoretical predictions. 
We determine the expected payoff and invasion fitness and analyze the selection gradient and equilibria for monomorphic populations. To explore effects of small yet finite mutations, we also consider a geometrical analysis and polymorphic populations.
Finally, in Sect. 5, we provide further discussion.   

\section{Model description}

\setcounter{equation}{0}
\setcounter{figure}{0}

\subsection{Continuous voluntary public good game}

We consider a well-mixed population. An individual has a continuous strategy involving two traits $(c,p)$ in $U:=[0,1]^2$. The first coordinate $c$ represents the amount of investment that the individual makes in the public good game. The second coordinate $p$ represents the probability of participation in this game. 
For each game $N$ individuals with $N \ge 2$ are randomly selected from the population. Each of the $N$-players first determines whether to participate in the public good game or not, with one's own probability $p$. 
Those who participate can contribute an investment at a cost $c$ to themselves. All individual contributions are added up and multiplied with a factor $r$ with $r>1$. This amount is then shared equally among all participants. Each participant's payoff is given as a net benefit that consists of his or her share less the amount invested. 
Individuals who do not participate in the public good game instead receive a small payoff $\sigma$ with $0<\sigma<r-1$ that is independent of outcomes of the public good game. We require $\sigma > 0$ to ensure that nonparticipation is better off than a group of those who make no investment ($c=0$), and that $\sigma<r-1$, that a group of those who make full investment ($c=1$) is better off than nonparticipation. 
We assume that if there is only one participant, this single player has to act as a nonparticipant and therefore the payoff is $\sigma$.

It is well known that in the case of compulsory participation ($p=1$ for all individuals), no contribution with $c=0$ is only Nash equilibrium for $r < N$ \cite{CreEtal11}.
A focal participant $i_0$ with $c=c_0$ will earn in a game with $N-1$ co-players with cooperation levels $\{c_1, \cdots , c_{N-1}\}$ the payoff
$P(i_0) = \frac{r}{N} \sum_{k=0}^{N-1} c_k - c_0.$ 
Thus, the marginal profit per unit increase in the focal player's contribution is given by $dP/dc_0 = r/N-1$. With $r < N$, this is negative and each participant is tempted to reduce own contribution to zero: $c=0$ is (weakly) dominant, or with $r>N$, the marginal profit is positive and $c=1$ is (weakly) dominant. Therefore, we hereafter concentrate on the most stringent case with $r<N$.   

\subsection{Individual-based model}

In the individual-based model, strategies spread in a finitely large population with size $M$ by imitation and exploration (``social learning''). 
We assume that individuals are more likely to imitate strategies of those who have earned higher payoffs. 
For simplicity, we straightforwardly apply the replicator dynamics to the finite population as in Doebeli et al. \cite{Doe04}.

We consider asynchronous sequential updating of the finite population, as follows.
First, a focal individual $i_0$ is randomly picked up from the population. 
The $i_0$'s payoff $P(i_0)$, then, is determined through an interaction with $N-1$ co-players selected randomly, $\{i_1, \cdots , i_{N-1}\}$. 
After making a participating-decision with one's own participation rate,  
the focal individual's payoff is settled as
\begin{equation}
\label{eq:A.38}
P(i_0)=
\begin{cases}
\displaystyle
\frac{r}{S} \sum_{k=0}^{S-1} c_k - c_0 & \text{if $i_0$ participates and has a co-player,} \\
\sigma & \text{otherwise,}
\end{cases}
\end{equation}
where $c_k$ denotes the $i_k$'s investment level and the first to ($S-1$)-th players among the $N-1$ co-players ($1< S \le N$) are participants.

For comparison, another model individual $j_0$ is randomly chosen, and then, its payoff $P(j_0)$ is determined as in Eq. (\ref{eq:A.38}) through an interaction with random $N-1$ co-players selected independent of the $i_0$'s case.
Whether the focal individual $i_0$ imitates the model $j_0$ is determined with a probability $w$ that is proportional to those payoff difference:
$w = \frac{P(j_0) - P(i_0)}{\alpha}$ if $P(j_0) > P(i_0)$; otherwise, if $P(j_0) \le P(i_0)$, $w=0$.
In the former case, $\alpha$ is fixed as $\frac{r(N-1)}{N} - (\frac{r}{N}-1)$, which denotes the maximum degree among available payoff differences, and thus, ensures that $w\leq1$. Its first term represents the payoff of a full-defector (with $c=0$) within $N-1$ full-cooperators (with $c=1$), and the second term represents the payoff of a full-cooperator within $N-1$ full-defectors.

Finally, after the imitation event, the exploration can happen with a small probability $\mu$. In the case each trait is replaced by a value drawn independently from a Gaussian distribution with the former value of the trait as mean and a small standard deviation $s$.

\begin{figure}
\begin{center}
\includegraphics[scale=0.80]{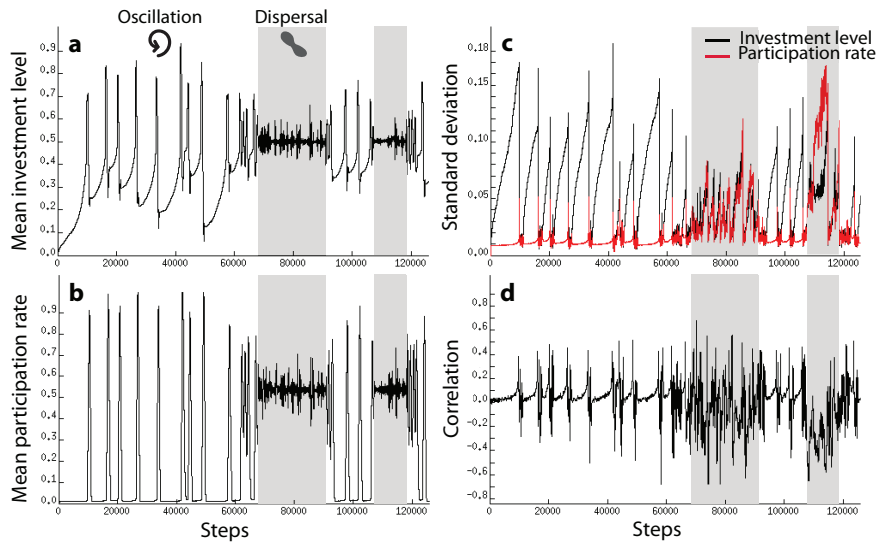}
\end{center}
\caption{ \label{fig:3}
Evolutionary history consisting of oscillation and dispersal.  
The initial state is monomorphic at $(c,p)=(0,0)$.
Mean investment level $\bar{c}$ (\textbf{a}), mean participation rate $\bar{p}$ (\textbf{b}),
standard deviations of investment levels (\textit{black}) and participation rates (\textit{red}) (\textbf{c}),
and 100-step moving average of the correlation between the two traits $c$ and $p$ (\textbf{d}) are provided.
Coevolution of cooperation and participation is characterized by transitions between the oscillation (\textit{white} intervals) and dispersal (\textit{gray} intervals) phases. 
The mean and the standard deviation of investment levels first increases gradually because of neutral drift along the boundary $p=0$, and the mean participation rate $\bar{p}$ remains close to 0, and then, increases rapidly when the mean investment $\bar{c}$ reaches about 0.5. After cycles for about $0.7\!\times\!10^5$ steps, the population attains a small neighborhood of the center Q.
Interestingly, from $0.7\!\times\!10^5$ to $0.9\!\times\!10^5$ steps, another dynamical phase ensues, in which $(\bar{c},\bar{p})$ is close to Q and slightly oscillates with quite a small amplitude.
Insufficient selection pressure there allows the trait distribution to spread and increase in variance. 
Another dispersal phase appears between $1.1\!\times\!10^5$ and $1.2\!\times\!10^5$ steps after the second oscillation phase.
From \textbf{c} and \textbf{d} one can observe that diversification can subsequently continue when the trait distribution keeps its correlation sufficiently negative. 
In particular, the second dispersal phase shows that a sudden contraction of diversity could occur when the negative correlation is lost.
Parameter: $M =10^4$, $\mu=0.001$, $s=0.005$, $N=5$, $r=3$, and $\sigma=1$.         
}
\end{figure} 

\begin{figure}
\begin{center}
\includegraphics[scale=0.80]{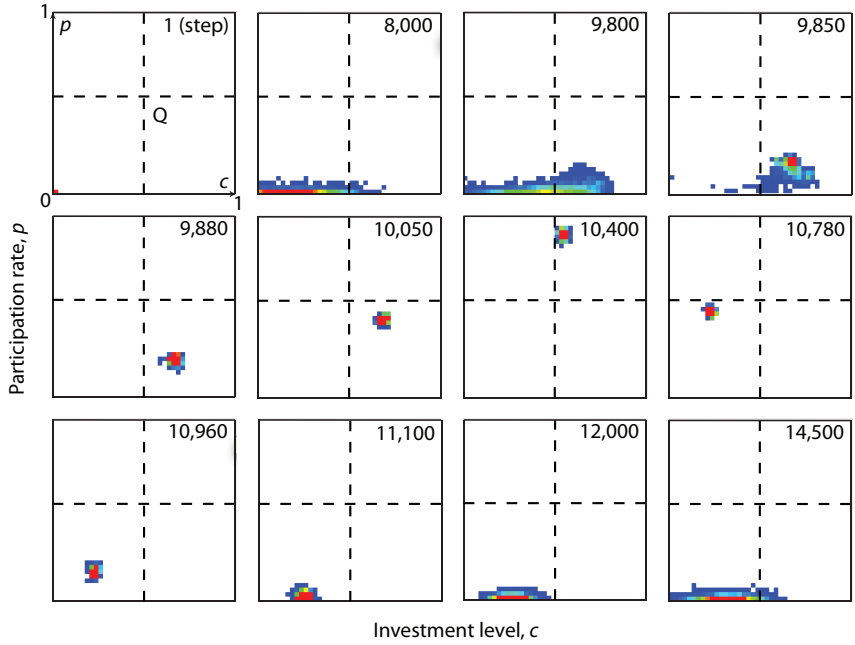}
\end{center}
\caption{\label{fig:4}
Snapshots of the oscillation phase corresponding to Fig. \ref{fig:3} (up to 14,500 steps). 
Each panel shows a snapshot of the frequency distribution of strategies (from high to low: \textit{red}, \textit{orange}, \textit{yellow}, \textit{green}, \textit{blue}, \textit{white} (for 0)) in the $(c,p)$ space at a certain step. 
The interior equilibrium Q is located at the intersection of the \textit{dashed} nullclines $c=0.5$ and $p \approx 0.5387$, beyond which the selection pressure on $p$ and $c$, respectively, changes.
The population is initially monomorphic at $(c,p)=(0,0)$.
Neutral drift first drives diversification of the investment level $c$ along the boundary $p=0$. 
As the distribution reaches around the point $(0.5,0)$, mutants with $c>0.5$ and $p>0$ can happen and then successfully invade.
Such mutants eventually displace the residents, and monomorphism is re-established. From there the population synchronously orbit the center Q. 
The orbit present is so distant from Q that the boundary $p=0$ absorbs it at the end. 
Then, neutral drift along $p=0$ begins, again.         
}
\end{figure} 

\begin{figure}
\begin{center}
\includegraphics[scale=0.80]{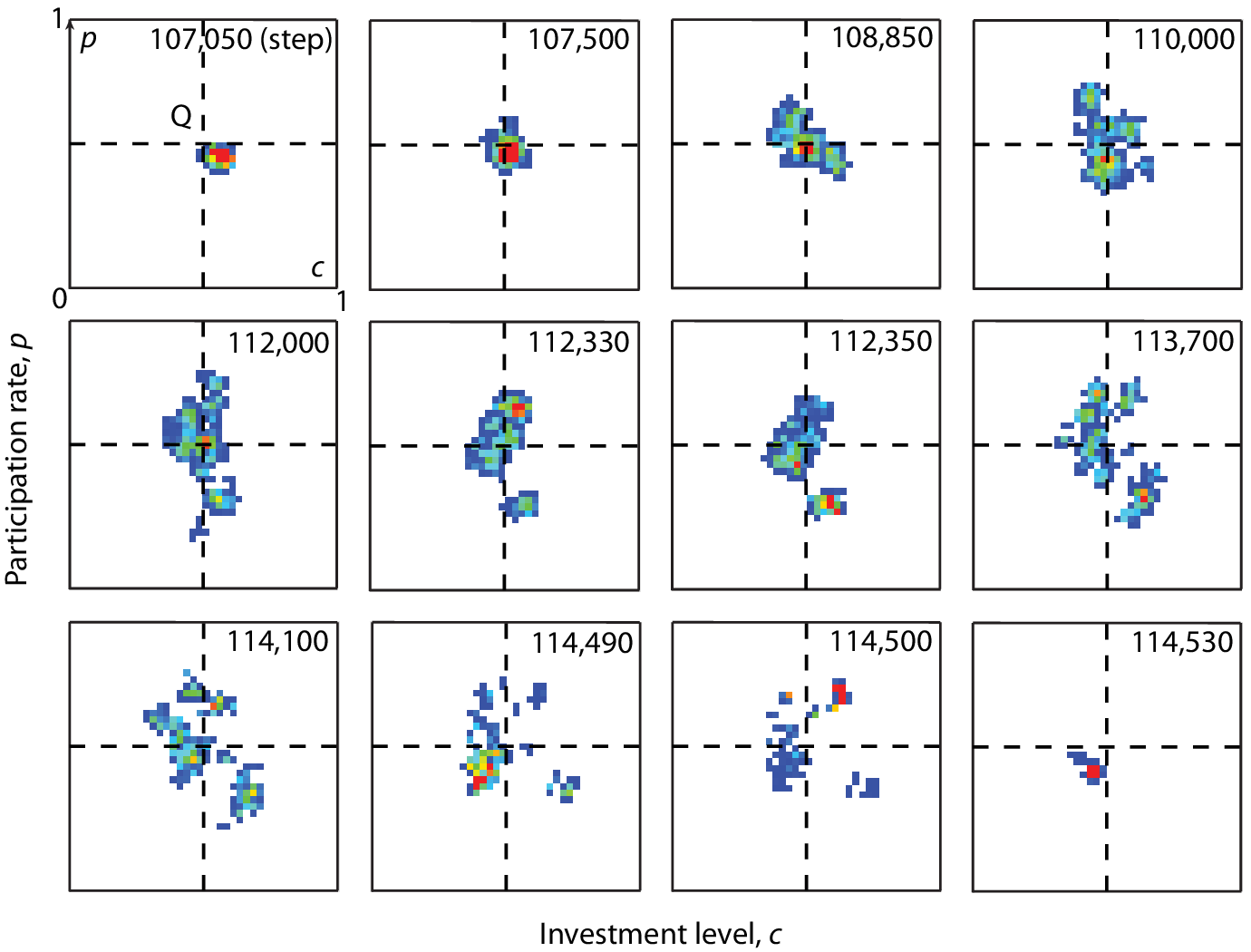}
\end{center}
\caption{\label{fig:5}
Snapshots of the dispersal phase corresponding to Fig. \ref{fig:3} (from 107,050 to 114,530 steps).
Panel features are the same as for Fig. \ref{fig:4}.
The cluster is already close to the center Q at 107,050 steps. After converging to Q, the gradual radiation derived from neutral explorations starts across Q while its trend remains negative. 
This dispersal process leads to splitting the residents into some distant subclusters after 112,000 steps.
Finally, the lower-right cluster that plays a major part in keeping the trend negative, vanishes accidentally. This subsequently causes a dramatically rapid contraction of diversity, and then, the population returns to monomorphism (114,530 steps).         
}
\end{figure}

\section{Results}

The individual-based model can embody surprisingly rich evolutionary processes. These are characterized by transitions between two qualitatively different dynamic phases, cyclic oscillation and diagonal dispersal. 
Figure \ref{fig:3} presents a typical sequence of the oscillation and dispersal phases and those corresponding snapshots are displayed in Figs. \ref{fig:4} and \ref{fig:5}, respectively. In particular, Fig. \ref{fig:6} demonstrates a trial in which investment levels largely diverge into full cooperation ($c=1$) and defection ($c=0$) (see also corresponding snapshots in Fig. \ref{fig:7}). We now explain each of the two dynamic phases in turn. 

\begin{description}
\item[(i) \textit{Cyclic oscillation.}]

Let the population start with a monomorphic state in which all individuals adopt the same state as $(c,p)=(0,0)$ (Figs. \ref{fig:3} and \ref{fig:4}). In the situation, any mutant with respect to the investment level $c$ is able to invade the resident through neutral drift, leading to a gradual increase in the deviation of $c$, while the participation rates $p$ are kept to a low level corresponding to the probability $\mu$ and standard deviation $s$ of explorations. Then, about when the strategy distribution reaches the critical investment level $c_\textrm{Q}$ given by $\sigma/(r-1)$ in Sect. 4.1, beyond which the sign of the selection pressure on $p$ changes, the widely spaced residents along the line $p=0$ are displaced by the invasion of a mutant with $c>c_\textrm{Q}$ and $p>0$, converging to a nearly homogeneous cluster quite rapidly.

When once leaving the line $c=0$, monomorphic populations should move along the orbits determined by the fitness landscape for a rare mutant invading at the resident traits (called ``selection gradient'', $D(x)$ in Eq. (\ref{eq:2.8})), which revolve around the center Q. We remark that in the individual-based model the orbit the population travels on would first be stochastically selected, depending on the former invading mutant. Then, for orbits further out, the population will eventually be absorbed to the boundary $p=0$ after orbiting, and again, be exposed to the effect of the neutral drift. 

\item[(ii) \textit{Convergence to the center.}]

When orbits further in have been selected, the directional selection pressure becomes so weak that the effects of the non-zero correlation between the two traits are relatively considerable.
Cycles around the center Q can be observed in individual-based simulations for typical sets of parameters for the standard voluntary public good game \cite{HauEtal02a,HauEtal02b}. 
According to the numerical investigations, the population is certain to reach a sufficiently small vicinity of Q sooner or later, differently of the deterministic prediction by the canonical equation. 
In the case the population faces the next dispersal phase.

\begin{figure}
\begin{center}
\includegraphics[scale=0.80]{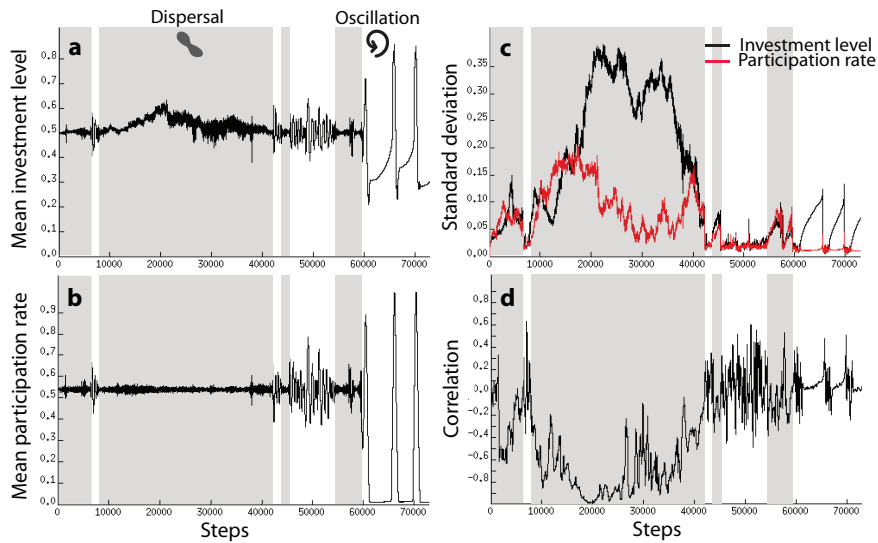}
\end{center}
\caption{\label{fig:6}
Evolutionary history consisting of oscillation and dispersal.
The initial state is monomorphic at the center Q.
Other parameters and panel features are as in Fig. \ref{fig:3}.
In this trial a quite large standard deviation in the investment level $c$ appears from 20,000 to 35,000 steps. As is shown in Fig. \ref{fig:7}, complete branching into both the extreme levels $c=0$ and $c=1$ ensues while the negative correlation between the two traits remains.         
}
\end{figure} 

\begin{figure}
\begin{center}
\includegraphics[scale=0.80]{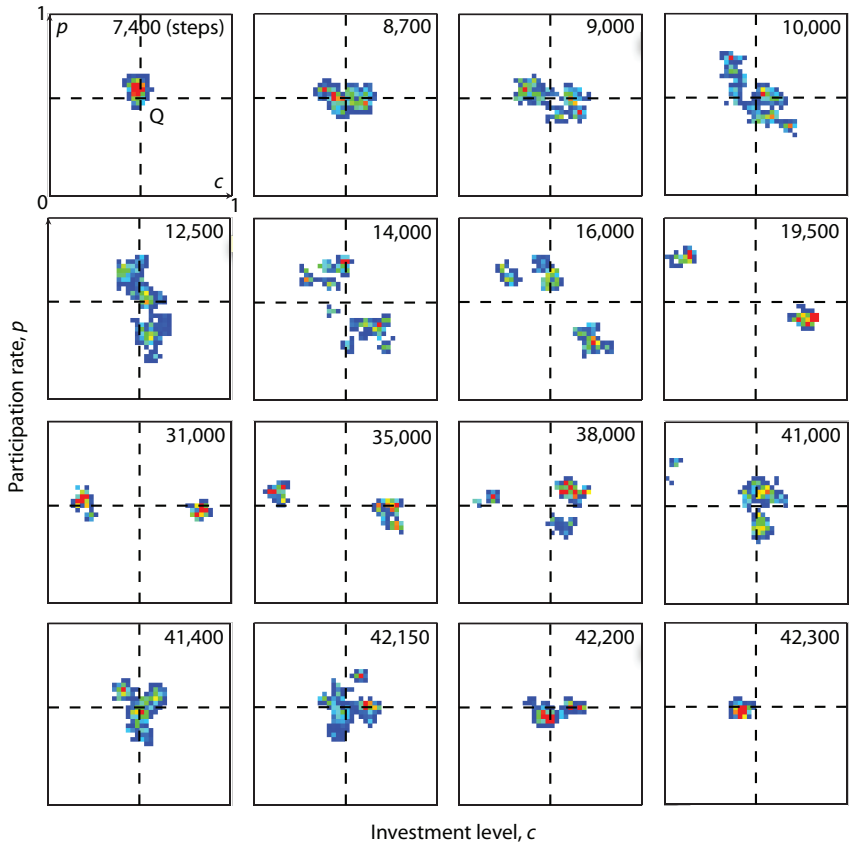}
\end{center}
\caption{\label{fig:7}
Snapshots of the dispersal phase corresponding to Fig. \ref{fig:6} (from 7,400 to 42,300 steps).
Panel features are the same as for Fig. \ref{fig:4}.
Two clusters that are distant from each other on either side of Q emerge after 12,500 steps, and then evolve into two distinct branches, each of which lies close to a pure strategy state, that is, full cooperation ($c=1$) or defection ($c=0$).
The right subcluster closer to Q becomes more widely distributed after 35,000 steps. 
After the left cluster vanishes (41,000 and 42,200 steps), the diversity quickly contracts, and then the population returns to monomorphism around Q (42,300 steps) (see also Fig. \ref{fig:5}).         
}
\end{figure} 

\item[(ii) \textit{Growth of the cloud and its collapse.}]

After converging to Q, the cluster starts radiating gradually. Some kinds of disturbance (e.g., arising from finite population sizes) cause biased spreading, such that even if it is only minor, minute fluctuations of the frequency distribution can occur, as observed in each dispersal phase (\textbf{a}, \textbf{b} of Figs. \ref{fig:3} and \ref{fig:6}). Although the outcomes of such fluctuations are difficult to predict analytically, the numerical simulations demonstrate that the diversified distribution survives when there has been a negative correlation between the two traits $c$ and $p$, or otherwise, does not.

Indeed, the first dispersal phase in Fig. \ref{fig:3} shows that the standard deviations repeat a small growth-and-decline cycle simultaneously, but neither has developed to a significant level. Then in the second dispersal phase, there is distinct development of diagonally located clusters (see also Fig. \ref{fig:5}) with a negative slope. If the diagonal dispersal goes further as shown in Figs. \ref{fig:6} and \ref{fig:7}, the resulting population can branch into two distinct clusters quite distant from each other across Q and there is long-run maintenance of a negative correlation over more than 30,000 steps.

The minute fluctuations remain throughout the dispersal phase, and are likely to affect clusters relatively closer to Q. Such an affected cluster then becomes more dispersed (sometimes splitting into subclusters). As a result, the local trend that is restricted to such a dispersed subgroup may be positive while the global trend still remains negative. This can cause an increase in the amplitude of fluctuations and the instability of the global population dynamics, often leading to extinction of some clusters. One can observe bursts of fluctuations that result in the diversity of the distribution being lost, as shown in Figs. \ref{fig:3} and \ref{fig:6}. However, if a cluster that vanishes has played a major part in maintaining the negative trend overall (see Fig. \ref{fig:5}, 114,490 steps; Fig. \ref{fig:7}, 41,000 and 42,200 steps), then there can be an incidental change to a positive trend. The positive correlation leads the diversified population to abruptly contract, often to a nearly uniform state. The location to which the distribution has contracted is often so far from the center Q that directional selection drives the population instead of neutral drift. We thus return to the cyclic oscillation phase.
\end{description}

We have also checked that the evolutionary history with both oscillation and dispersal as in Fig. \ref{fig:3} can happen for other parameters close to those for Fig. \ref{fig:3}, including large exploration probability 0.05.
We note that the dispersal phase is subdued compared to cases with high exploration probability $\mu$ or large standard deviation of explorations $s$. By contrast, the oscillation phase is largely unaffected by changes in the exploration parameters $\mu$ and $s$.

\section{Analysis}

Two complementary approaches are used for studying evolutionary dynamics of continuous games: replicator dynamics on probability distributions of strategies \cite{Oech02} and the canonical equation of adaptive dynamics \cite{dieckmann}. In the former approach, all strategies are present in the population at all times, while in the latter only certain resident strategies are present and other strategies are tried out occasionally through tinkering with investment and participation levels.  
In this section we use the mathematical framework of adaptive dynamics \cite{dieckmann,geritz,HofSig98,Ake13} to analyze the evolution resulting from the individual-based model. 

\subsection{Cyclic oscillation}

\subsubsection*{Invasion fitness}

We consider an infinitely large population and assume that at regular intervals $N$ individuals with $N \ge 2$ are randomly selected and offered the option to participate in the public good game. 
Consider a monomorphic resident in which each individual uses the same strategy $x=(c_x,p_x)$. Let us introduce \textit{invasion fitness} $S(x,y)$, that denotes the growth rate of a mutant strategy $y=(c_y,p_y)$, in the monomorphic resident. We assume that the growth rate of a rare mutant in a resident population is determined by the replicator dynamics. We can thus obtain an expression for the invasion fitness for the rare mutant within the resident population as follows:
\begin{equation}
\label{eq:2.1}
S(x,y)=P(x,y)-\bar{P}(x),
\end{equation}
where $P(x,y)$ describes the expected payoff of the mutant player with $y$ interacting with the other $N-1$ resident players with $x$, and $\bar{P}(x)$ expresses the average payoff over the resident population (and thus $\bar{P}(x)=P(x,x)$).
In the model it is convenient to define by $g(x,y)$ the mutant's expected payoff when the mutant participates in the public good game with its own probability $p_y$. Using this and the nonparticipant's payoff $\sigma$, the mutant's expected payoff is given by a linear equation with respect to $p_y$:
\begin{equation}
\label{eq:2.2}
P(x,y)=p_y g(x,y) + (1-p_y) \sigma.
\end{equation} 
See Eq. (\ref{eq:2.3}) in Appendix A.1 for details of $g(x,y)$. This yields
\begin{eqnarray}
\label{eq:2.5}
P(x,y) \! &=& \! \sigma \!+\! (r\!-\!1)(c_x \!-\! c_\textrm{Q})(1\!-\!(1\!-\!p_x)^{N-1}) p_y \!-\! (c_y \!-\! c_x) p_y F(1\!-\!p_x),
\end{eqnarray}
in which  $c_\textrm{Q}=\sigma/(r-1)$ and
\begin{eqnarray}
\label{eq:2.4}
F(z) = 1+(r-1)z^{N-1}-\frac{r(1-z^N)}{N(1-z)}.
\end{eqnarray}
We note that the set of roots of $F(z)$ in $[0,1]$ is only $z=1$ for $r\leq2$, and in addition there is a unique interior root $z_\textrm{Q}$ for $r>2$ \cite{HauEtal02b}.
Then we find that the invasion fitness in Eq. (\ref{eq:2.1}) is linear in the mutant traits $c_y$ and $p_y$, as follows,
\begin{equation}
\label{eq:2.7}
S(x,y)=a_{10}(x)(c_y-c_x)+a_{01}(x)(p_y-p_x)+a_{11}(x)(c_y-c_x)(p_y-p_x),
\end{equation}
where $a_{10}(x)=-p_{x}F(1-p_{x})$, $a_{01}(x)=(r-1)(c_{x}-c_Q)(1-(1-p_{x})^{N-1})$, and $a_{11}(x)=-F(1-p_{x})$.

\subsubsection*{Selection gradient}

The adaptive dynamics of the resident strategy $x$ is governed by the selection gradient as below, with the exception of the vicinity of its equilibrium points.
\begin{equation}
\label{eq:2.8}
D(x)
= \left( \genfrac{}{}{0pt}{}{ \left. \frac{\partial S(x,y)}{\partial c_y} \right|_{y=x} }
{ \left. \frac{\partial S(x,y)}{\partial p_y} \right|_{y=x}} \right)
= \binom{a_{10}(x)}{a_{01}(x)},
\end{equation}
so that $\dot{x}=D(x)$. An example is given in Fig. \ref{fig:1}.
This vector associated with strategy $x$ points in the direction of the maximal increase of the mutant's advantage over the resident population: $D(x)$ suggests the most favorable direction. 
In general, the adaptive dynamics for monomorphic populations with strategy $x$ is expressed using its canonical equation $\dot{x}= kVD(x)$, where $V$ is the variance-covariance matrix of the difference vector between the mutant and its parent, and the coefficient $k$ depends on the equilibrium for the population size and the mutational process \cite{dieckmann,Meszena,Lei09}. From the assumptions of the individual-based model, hereafter we analyze the canonical equation with $kV=\mathbf{1}$ (unit matrix).
The system being considered is
\begin{eqnarray}
\label{eq:2.9}
\dot{c} &=& -p F(1-p), \\
\label{eq:2.10}
\dot{p} &=& (r-1)(c-c_Q)(1-(1-p)^{N-1}),
\end{eqnarray}
which result in that every orbit has the line of symmetry $c=c_\textrm{Q}$.
We focus on the upper half plane $p>0$, excluding the exceptional line $p = 0$, a continuum of equilibria.
In the case $r\le 2$ (Fig. \ref{fig:1}\textbf{b}), this plane is filled with the orbits issuing from the one half of the $p$-axis $\{ c>c_\textrm{Q}, p=0 \}$, moving counter-clockwise around the point $(c_{\rm{Q}},0)$, and converging to the other half $\{ c<c_\textrm{Q}, p=0 \}$.
When an orbit converges to the boundary of the strategy space, we assume that the orbit then is governed by the selection gradient naturally projected to the boundary. For instance, in Fig. \ref{fig:1}\textbf{b} an orbit contacting the line $c = 0$ will move along this line with decrease in $p$, converging to $p=0$.     

We then turn to the case that $r > 2$ (Fig. \ref{fig:1}\textbf{a}) (see \cite{HauEtal02b} for the discrete voluntary public good game).
Dividing the right-hand sides in Eqs. (\ref{eq:2.9}) and (\ref{eq:2.10}) by $1-(1-p)^{N-1}$, which corresponds to a change of velocity, thus does not affect the orbits on the subspace. This yields 
\begin{eqnarray}
\label{eq:2.11}
\dot{c} &=& - \frac{pF(1-p)}{1-(1-p)^{N-1}}
=  - \frac{F(1-p)}{1+\sum_{i=1}^{N-2}{(1-p)^{i}}}, \\
\label{eq:2.12}
\dot{p} &=& (r-1)(c-c_Q).
\end{eqnarray}
We define these as $-g(p)$ and $l(c)$, respectively. We then introduce $H(c,p):=G(p)+L(c)$, where $G(p)$ and $L(c)$ are primitive functions of $g(p)$ and $l(c)$, respectively. The function $H$ is a constant of motion: $\dot{H}=\frac{\partial H}{\partial c} \dot{c} + \frac{\partial H}{\partial p} \dot{p} \equiv 0$.
The Hessian of $H$ is symmetric positive definite (and thus $H$ attains a strict minimum) at  the point ${\rm{Q}}=(c_{\rm{Q}}, p_{\rm{Q}})$, where $p_{\rm{Q}}$ equals $1-z_{\rm{Q}}$. We have used the fact that $F(z_{\rm{Q}})=0$ and  $F'(z_{\rm{Q}})<0$ \cite{HauEtal02b}. Therefore, a neighborhood of Q is filled with closed periodic orbits.

To see the global dynamics, we explore the vicinity of the boundary point $(c_\textrm{Q},0)$. The Jacobian for Eqs. (\ref{eq:2.11}) and (\ref{eq:2.12}) at this point is given by
\begin{equation}
\label{eq:2.15}
\left (
\begin{matrix}
0 & \frac{F'(1)}{N-1} \\
r-1 & 0
\end{matrix}
\right ).
\end{equation}
We note that $F'(1)=(\frac{r}{2}-1)(N-1) > 0$ for $r>2$, in which case it follows that the matrix has two real eigenvalues of different sign and thus the equilibrium $(c_{\rm{Q}},0)$ is a saddle point. 
For $p>0$, the orbits in the vicinity of the saddle point agrees with the orbits associated with Eqs. (\ref{eq:2.9}) and (\ref{eq:2.10}).  
Moreover, considering the symmetry to the line $c=c_{\rm{Q}}$, the separatrices for the saddle point connect each other and comprise the critical level set surrounding Q; its inside is filled with the closed periodic orbits, and its outside is filled with heteroclinic orbits issuing from the one half of the line $\{ c>c_\textrm{Q}, p=0 \}$, turning around Q, and returning to the other half $\{ c<c_\textrm{Q}, p=0 \}$.    
   
\subsubsection*{Singular strategies}

Equilibrium points of the selection gradient are called singular strategies, and they are given by solutions of $D(x)=0$.
If the monomorphic population takes a singular strategy, the selection pressure on both the directions of $c$ and $p$ vanishes; otherwise, the monomorphic population is always under directional selection. 
From Eqs. (\ref{eq:2.9}) and (\ref{eq:2.10}), it follows that singular strategies of the continuous voluntary public good game are given by the boundary line $p = 0$, and for $r>2$, also the point Q.
The dynamic analysis implies that Q is a center surrounded locally by periodical closed orbits \cite{hirsch}, and as long as mutants arise infinitesimally rare and near, the monomorphic population on the cyclic orbit should not reach the center Q.

For  $r>2$, the $c$-coordinate of Q, $c_{\rm{Q}}=\sigma/(r-1)$, increases with the nonparticipant's payoff $\sigma$ and decreases with increasing the multiplication factor $r$. Changes in  the group size $N$ do not affect the $c$-coordinate of Q. In the case of its $p$-coordinate, $p_{\rm{Q}}$, we know from the differential (or difference) of $F$ in Eq. (\ref{eq:2.4}) with respect to $r$ or $N$, that $F(1-p)$ decreases with increasing $r$ and increases with $N$. Considering that $F(1-p)$ is positive for $p_{\rm{Q}}<p \le 1$ or negative for $0<p<p_{\rm{Q}}$ \cite{HauEtal02b}, thus the value of unique interior root $p_{\rm{Q}}$ must increase with $r$ and decrease with increasing $N$.   

If $x$ is a singular strategy among Q and $p=0$, the invasion fitness degenerates, that is, $S(x,y)=P(x,y)-P(x,x)=0$ holds for all strategy $y$ in the strategy space $U$. (We note that $P(x,y)=\sigma=P(x,x)$ if $x = \textrm{Q}$ or $p_x=0$.)
This implies that Q and each point of $p=0$ are (symmetric and not strict) Nash equilibria and from the equality any mutant strategy may first through neutral drift invade a resident population with the singular strategy. 
In Appendix A.3 we fully analyze the replicator dynamics for two distinct strategies in the continuous voluntary public good game.  
Indeed, we have that $S(\textrm{Q},u) = 0$ and $S(u,\textrm{Q}) < 0$ for all strategy $u$ within $\{ c<c_\textrm{Q}, p>p_\textrm{Q} \}$ or $\{ c>c_\textrm{Q}, 0< p<p_\textrm{Q} \}$. Considering Eq. (\ref{eq:A.34}), this indicates that the strategies Q and $u$ are mutually invasible in the replicator dynamics.
In the case of a resident population with $p=0$, then any mutant $y$ within $ \{ c > c_\textrm{Q}, p>0 \}$ can invade and eventually replace the resident nonparticipation strategy. Therefore, neither Q nor $p=0$ is an evolutionary stable strategy (ESS) \cite{may73,Less90} and thus satisfy any of the following properties: evolutionary robust strategy \cite{Oech02}, strongly uninvadable strategy \cite{Bom90}, and continuously stable strategy (CSS) \cite{Esh83}.
In addition, if its initial fraction $\varepsilon$ is sufficiently small, the mutant strategy Q is not able to invade any nearby resident population in $\{ c>c_\textrm{Q}, p>p_\textrm{Q} \}$ or $\{ c<c_\textrm{Q}, 0< p<p_\textrm{Q} \}$.
Also considering that neutrality among strategies with $p=0$, this implies that neither Q nor $p=0$ is an neighborhood invader strategy (NIS) \cite{Apo97}. The singular strategies above are not always effective to invade a nearby resident population.

\begin{figure}
\begin{center}
\includegraphics[scale=0.68]{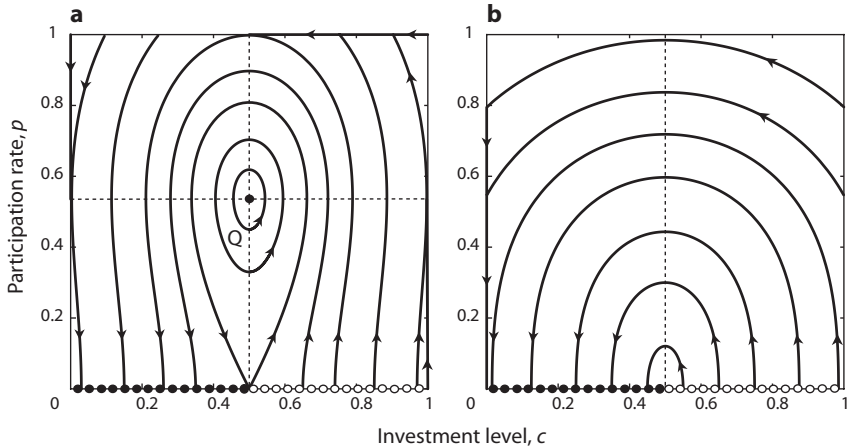}
\end{center}
\caption{\label{fig:1}
Selection gradient depicted by Eqs. (\ref{eq:2.9}) and (\ref{eq:2.10}) for $N=5$, and in \textbf{a}, $r=3$ and $\sigma=1$, or in \textbf{b}, $r=1.8$ and $\sigma=0.4$. A set of singular strategies consists of the point $\textrm{Q}(0.5,p_\textrm{Q})$ and the line $p=0$ ($p_\textrm{Q} \approx 0.5387$) in \textbf{a} or only the line in \textbf{b}. 
(\textbf{a}) The point Q is a center surrounded by closed orbits. The point $(0.5,0)$ divides the boundary line $p=0$ into a left segment of stable fixed points (Lyapunov stable; \textit{closed} circles) and a right segment of unstable fixed points (\textit{open} circles).         
(\textbf{b}) The strategy space $U$ is filled with the orbits connecting the right segment of $p=0$ (or the line $c=1$) and left segments of $p=0$.
}
\end{figure} 

When the population is monomorphic with strategy Q, then the canonical equation, based on infinitesimal small mutations, is not able to predict the evolutionary dynamics, because of the flatness of invasion fitness ($S(\textrm{Q},y)=0$ for all $y$ in $U$) and substantial effects of neutral drift, instead of directional selection.   
Indeed, with considering a more realistic situation in that exploration is not always small and emergent innovation happens \cite{Hof09}, evolutionary dynamics can more easily result in branching and polymorphism \cite{geritz2,kisdi,vukics,Nishimura,ito,ito2}. 
For instance, in a sufficiently small neighborhood of Q directional selection is so weak that a next exploration would occur before substitution of the resident strategy has been completed. For last decade fundamental techniques to investigate convergence and stability for multi-dimensional adaptive dynamics have been developed \cite{Cress05,Cress06}.  
Yet satisfactory analyses of evolutionary diversification around a non-ESS singular strategy, in particular like the center Q, have not been published. In the following section, accordingly, to investigate effects on cyclic orbits of small, but finite mutations, we will expand the analysis of adaptive dynamics to the \textit{neutral} direction along normal vectors of the selection gradient.

\subsection{Convergence to the center}

Here we will provide a geometrical analysis of local evolutionary dynamics and classify the strategy space $U$ into four regions, depending on the abilities to invade and to be invaded for a given strategy \cite{mazancourt}. 
Of particular interest is to explore whether there is a meaningful trait region in which evolutionary branching can happen. 
Coupling these abilities generates a two-dimensional pairwise invasibility plot \cite{geritz}, clarifying what type of evolutionary scenarios happen between minor mutants and major residents.

From the arrangement of the singular strategies and isoclines, the subspace of $U$, in which both the directional selection pressure do not vanish, are naturally divided to four types of quadrant around Q, \textbf{I} $=\{ c>c_\textrm{Q}, p>p_\textrm{Q} \}$, \textbf{II} $=\{ c<c_\textrm{Q}, p>p_\textrm{Q} \}$, \textbf{III} $=\{ c<c_\textrm{Q}, p<p_\textrm{Q} \}$, and \textbf{IV} $=\{ c>c_\textrm{Q}, p<p_\textrm{Q} \}$.  
Let us look the curve $C_m = \{ (c_y,p_y )| S(x,y)=0 \}$, which goes through the given focal point $x$ and separates regions of (mutational) strategies that can invade into the resident population with $x$ from strategies that cannot.
Equation (\ref{eq:2.7}) yields that $C_m$ is a hyperbolic curve for all points $x$ in $U$ except $x=\textrm{Q}$, as follows:
\begin{eqnarray}
\label{eq:2.16}
p_y \left[
c_y - \left( c_x - \frac{a_{01}(x)}{a_{11}(x)} \right)
\right]
= p_x \left( \frac{a_{01}(x)}{a_{11}(x)} \right)
\begin{cases}
<0 & \text{if \ $x \in$ \bf{I} $\cup$ \bf{III}},\\
>0 & \text{if \ $x \in$ \bf{II} $\cup$ \bf{IV}}.
\end{cases}
\end{eqnarray}
The asymptotic lines of the hyperbola $C_m$ are thus given by $p_y=0$ and $c_y=c_x-a_{01}(x)/a_{11}(x)$.
The asymptotic line $p_y=0$ means that all we need to consider is the upper one of two connected components of the hyperbola, because the lower one is located entirely out of $U$.
Then, let us consider the curve $C_r = \{ (c_y,p_y )| S(y,x)=0 \}$, which separates (resident) strategies that can be invaded by a given mutant with $x$ from strategies that cannot.
It is obvious that $C_r$ goes through $x$ and Q.
We analyze $C_r$ locally in the neighborhood of $x$. 
From a straightforward calculation, we can obtain that $C_r$ and $C_m$ have identical slopes at $x$.
So, to clarify the local arrangement of $C_m$ and $C_r$, there remains only an investigation of these second order approximations because $C_m$ is a quadric curve. Appendix A.2 provides calculations of curvatures $\kappa_m(x)$ and $\kappa_r(x)$ of $C_m$ and $C_r$ at $x$, respectively, and estimations of signs of these curvatures and the difference between them. Signs of all indexes we need to consider are displayed on Table \ref{table:1} and organized by the four quadrants \textbf{I}, \textbf{II}, \textbf{III}, and \textbf{IV}.     

\begin{table}[tb]
\caption{Sign configurations of indexes for geometric analysis}\label{table:1}
\begin{center}
\begin{tabular}{c|cccc}
regions & \bf{I} & \bf{II} & \bf{III} & \bf{IV} \\ \hline
$a_{01}(x)$ & $+$ & $-$ & $-$ & $+$ \\
$a_{11}(x)$ & $-$ & $-$ & $+$ & $+$ \\
slope: $p'_y(x)$ & $+$ & $-$ & $+$ & $-$ \\
$\kappa_m(x)$ & $+$ & $+$ & $+$ & $+$ \\
$\kappa_r(x)$ & $-$ & $-$ & $+$ & $+$ \\
$\kappa_r(x)-\kappa_m(x)$ & $-$ & $-$ & $+$ & $+$ \\ 
\end{tabular}
\end{center}
\end{table}

\begin{figure}
\begin{center}
\includegraphics[scale=0.68]{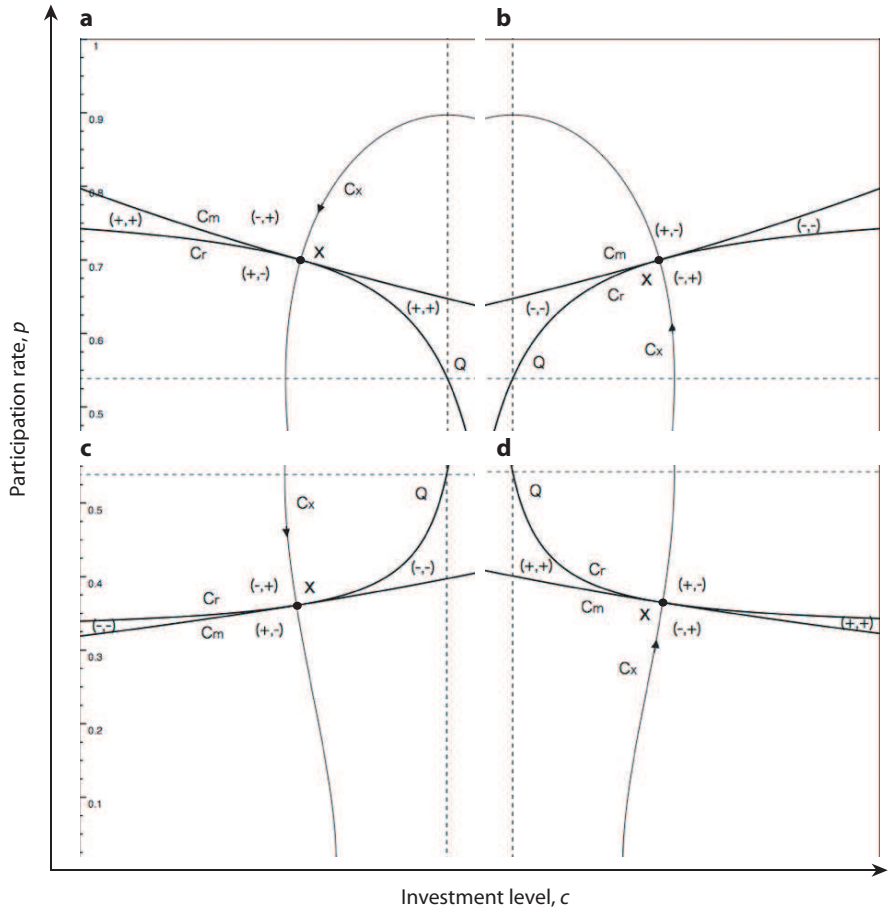}
\end{center}
\caption{\label{fig:2}
Geometric analysis of pairwise invasibility for $N=5$, $r=3$, and $\sigma=1$. The point ${\textrm{Q}}\,(0.5,p_\textrm{Q})$ is a center surrounded by cyclic orbits ($p_\textrm{Q} \approx 0.5387$). 
The focal points $x$ are given by $(0.7,0.7)$, $(0.3,0.7)$, $(0.2957,0.361)$, and $(0.7043,0.361)$ in \textbf{a}, \textbf{b}, \textbf{c}, and \textbf{d}, respectively.
These points ride on the same orbit $C_x$ and are located, respectively, in quadrants \textbf{II}, \textbf{III}, \textbf{IV}, and \textbf{I} around Q, which have different local arrangements of the two specific curves, $C_m:S(x,y)=0$ and $C_r:S(y,x)=0$.
The curves $C_m$ and $C_r$ are tangent to each other at the focal point $x$.
$C_m$ is a hyperbola, and $C_r$ passes through $\textrm{Q}$. 
These curves are orthogonal to the orbit $C_x$.
Sign couplings, e.g. $(+,-)$, means $(\text{sgn}\:S(x,y),\text{sgn}\:S(y,x))$.
For instance, a rare mutant with $y$ in the $(+,-)$-part is able to invade the resident population with the focal strategy $x$ (because $S(x,y)>0$); yet a rare mutant with $x$ is not able to invade the resident population with $y$ (because $S(y,x)<0$). According to Appendix A.3, this indicates that the focal resident $x$ can be replaced with any mutant in the the $(+,-)$-part.
}
\end{figure} 

From Table \ref{table:1}, we can obtain complete information about possible combinations of the signs of $S(x,y)$ and $S(y,x)$ in the vicinity of a focal point $x$.
Figure \ref{fig:2} provides typical plots of $C_r$ and $C_m$, and configurations of $(\text{sgn}\:S(x,y),\text{sgn}\:S(y,x))$.
According to Appendix A.3, in the continuous voluntary public good game the type of frequency dynamics between any two strategies $x_1$ and $x_2$ can be determined by $(\text{sgn}\:S(x_2,x_1),$ $\text{sgn}\:S(x_1,x_2))$. 
As such, the case of $(\text{sgn}\:S(x,y),\text{sgn}\:S(y,x))=(+,-)$ means that the mutant with $y$ can invade the resident population with $x$ and inevitably replace it. 
On the other hand, the case of $(-,+)$ means that the mutant $y$ cannot invade the resident $x$ and any mixed state of strategies $x$ and $y$ leads to extinction of $y$.
The case of $(+,+)$ means that strategies $x$ and $y$ can invade mutually and there is uniquely a stable coexisting state.
The case of $(-,-)$ means that strategies $x$ and $y$ cannot invade each other, and thus, the monomorphic resident population with $x$ is stable if the number of mutants is infinitesimally small.
For a given point $x$ under directional selection, the neighborhood of $x$ always has two types of combinations, $(+,-)$ and $(-,+)$, in the front of and behind $x$ along the direction of evolution, respectively.
    
Considering the signs of curvatures of $C_m$ and $C_r$, local evolutionary dynamics can be classified to four types, corresponding to the four regions \textbf{I}, \textbf{II}, \textbf{III}, and \textbf{IV}, as in the following list. Let us denote as $C_x$ the orbital curve at $x$, and as $N_x$ the line orthogonal to $C_x$ at $x$.
\begin{description}
\item[\textbf{I}]
$C_m$ is concave and $C_r$ is convex to the direction of evolution along the orbit $C_x$. $C_r$ is located behind $C_m$.
We have $(\text{sgn}\:S(x,y),\text{sgn}\:S(y,x))=(-,-)$ at the gap between $C_m$ and $C_r$.
$N_x$ is in the gap and sandwiched by these curves.
\item[\textbf{II}]
$C_m$ is convex and $C_r$ is concave to the direction of evolution along $C_x$.
$C_r$ is located in the front of $C_m$.
We have $(+,+)$ at the gap between $C_m$ and $C_r$.
$N_x$ is included in the gap.
\item[\textbf{III}]
Both $C_m$ and $C_r$ are convex to the direction of evolution along $C_x$. 
$C_r$ is located behind $C_m$.
We have $(-,-)$ at the gap between $C_m$ and $C_r$.
$N_x$ is located in the front of both $C_m$ and $C_r$ and within $(+,-)$-sign part.
\item[\textbf{IV}]
Both $C_m$ and $C_r$ are concave to the direction of evolution along $C_x$. 
$C_r$ is located in the front of $C_m$.
We have $(+,+)$ at the gap between $C_m$ and $C_r$.
$N_x$ is located behind both $C_m$ and $C_r$ and within $(-,+)$-part.
\end{description}

\subsection{Growth of the cloud and its collapse}

\subsubsection*{Invasion fitness for polymorphic populations}

We focus on stable dimorphism arising by $(+,+)$-part in \textbf{II} or \textbf{IV}. 
A mutant arising from the $(+,+)$-part for the focal point $x$ is invasible to and can coexist with $x$.
So far, it has been assumed that the focal resident population is monomorphic.
According to Appendix A.1, polymorphic populations with strategy distribution $X$ are naturally mapped on monomorphic populations with strategy $x^*$ while the mapping $\pi$ keeps the invasion fitness unchanged: 
\begin{equation}
\label{eq:2.18}
S(X,y)=S(x^*,y).
\end{equation}
The mapping $x^*=(c_{x^*},p_{x^*})=\pi(X)$ is given by that $p_{x^*}=\bar{p}_x$ and $c_{x^*}=\bar{c}_x + \textrm{Cov}(X)/\bar{p}_x$ in Eqs. (\ref{eq:A.7}) and (\ref{eq:A.8}). 
Equation (\ref{eq:2.18})  means that arguments for monomorphic resident populations can extend to polymorphic resident populations by using the representative strategy $x^*$.
Coexisting patches of $X$ must then be situated on the hyperbola $C_{r^*} = \{ (c_y,p_y)| S(x^*,y)=0 \}$ associated with $x^*$; in particular, in the case of two patches, these are located on $C_{r^*}$ across $x^*$.

\subsubsection*{Selection gradient for polymorphic populations}

To understand in depth how strategic diversification can affect the evolutionary fate of populations, we would need to extend the selection gradient to polymorphic resident populations.   
We assume that a rare mutant is stochastically emerging around each patch of a resident population and the mutant's probability distribution is proportional to the frequency distribution of the resident population.
Let $X=\{ (x_i, n_i) \}_{i=1,\cdots, K}$ be the distribution 
of a polymorphic resident population with finite support,
where $x_i=(c_{x_i},p_{x_i})$ and each $n_i$ denotes the relative frequency of strategy $x_i$ with $\sum_{1\le i \le K} n_i = 1$.
We then employ the mutant's distribution $Y$ related to $X$, 
such that $Y=\{ (y_i, n_i) \}_{i=1,\cdots, K}$ where any $y_i(=(c_{y_i},p_{y_i}))$ is sufficiently close to $x_i$.

To predict evolutionary trajectories after diversification, we define an integrated selection gradient for the averaged strategy of the resident distribution $X$, $\bar{x}=(\bar{c}_{x},\bar{p}_{x})$, 
as a weighted sum of the selection gradient induced by each mutant strategy $y_i$ arising near $x_i$, that is,   
\begin{equation}
\label{eq:2.19}
	D(\bar{x})
	:= \left( \genfrac{}{}{0pt}{}{ \sum^K_{i=1} \left. n_i\frac{\partial S(X,y_i)}{\partial c_{yi}} \right|_{y_i=x_i} }
	{ \sum^K_{i=1} \left. n_i\frac{\partial S(X,y_i)}{\partial p_{yi}} \right|_{y_i=x_i}} \right),
\end{equation}
where
\begin{eqnarray}
\label{eq:2.20}
{\left . \genfrac{}{}{}{0}
{\partial S(X,y_i)}{\partial c_{y_i}} \right |_{y_i=x_i}}
&=& a_{10}(x^*)+a_{11}(x^*)(p_{x_i}-p_{x^*}), \\
\label{eq:2.21}
{\left . \genfrac{}{}{}{0}
{\partial S(X,y_i)}{\partial p_{y_i}} \right |_{y_i=x_i}}
&=& a_{01}(x^*)+a_{11}(x^*)(c_{x_i}-c_{x^*}). 
\end{eqnarray}
Then, Eq. (\ref{eq:2.19}) is rewritten as
\begin{equation}
\label{eq:2.22}
	D(\bar{x}) =
	\left(
	\begin{matrix}
	a_{10}(x^*)  \\
	a_{01}(x^*) - a_{11}(x^*) \displaystyle{\frac{\textrm{Cov}(X)}{\bar{p}_x}}
	\end{matrix}
	\right)
        =
	\left(
	\begin{matrix}
	a_{10}(\bar{x})  \\
	a_{01}(\bar{x}) + G(\bar{z}_{x}) \textrm{Cov}(X)
	\end{matrix}
	\right),
\end{equation}
where $\bar{z}_{x}=1-\bar{p}_x$, and
\begin{eqnarray}
\label{eq:2.23}
G(z) = \frac{r}{1-z} \left(1- \frac{1-z^N}{N(1-z)} \right)
> 0 \quad \text{for all $z$ in $[0,1]$}.
\end{eqnarray}
If $X$ is monomorphic, $\textrm{Cov}(X)=0$, and hence, Eq. (\ref{eq:2.8}) is recovered. 
Monomorphic and polymorphic populations differ in the additional terms, $-a_{11}(x^*)\textrm{Cov}(X)/\bar{p}_x$ and $G(\bar{z}_{x})\textrm{Cov}(X)$ in Eq. (\ref{eq:2.22}), which would lead to some changes in orbits along which the average strategy $\bar{x}$ travels.

We have analyzed in Sect. 4.2 that there is a chance of polymorphism along a normal of the selection gradient in the quadrant \textbf{II} or \textbf{IV}. Thus, the slopes of the normals are negative, and also is the resulting covariance between the two traits $\textrm{Cov}(X)$. This leads to such that a diversified population (and its average strategy) in \textbf{II} and \textbf{IV} can turn in and out of those former orbits, respectively, as in Eq. (\ref{eq:2.22}). In the case of the quadrant \textbf{I} or \textbf{III}, there is no $(+,+)$-part and thus a rare possibility to maintain dimorphism that resulted from mutations. As such, it is not required to consider the effects of the additional term in Eq. (\ref{eq:2.22}) in \textbf{I} and \textbf{III}.
From these arguments, we could state that there is a specific region of the strategy space in which populations tend to be with the negative covariance and are likely to be pulled into inner orbits. 

\subsubsection*{Singular strategies for polymorphic populations}

From Eq. (\ref{eq:2.22}) we understand that Q is an equilibrium point even if $X$ is polymorphic.
Here we investigate the stability of the population with $x^*=\rm{Q}$.
We calculate the Jacobian of the selection gradient $D(\bar{x})$ in Eq. (\ref{eq:2.22}) with respect to $(c_{x^*},p_{x^*})$.
Considering that $\textrm{Cov}(X)/\bar{p}_{x} = c_{x^*}-\bar{c}_x$ is independent of $p_{x^*}$, $F(1-p_{\rm{Q}})=0$, and $(r-1)c_{\rm{Q}}-\sigma = 0$, we obtain
\begin{equation}
\label{eq:2.24}
J |_{x^*=\rm{Q}} 
=
 \left (
 \begin{matrix}
  0  &  p_{\rm{Q}} F'(1-p_{\rm{Q}}) \\
  (r-1)(1-(1-p_{\rm{Q}})^{N-1}) &  \textrm{Cov}(X) \left( \genfrac{}{}{}{0}{-F'(1-p_{\rm{Q}})}{p_{\rm{Q}}} \right)
    \end{matrix}
\right ).
\end{equation}
\begin{theorem}
The center $\rm{Q}$ is stable if and only if $\rm{Cov}(X)<0$.
\end{theorem}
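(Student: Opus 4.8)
The plan is to read off the linear stability of the polymorphic selection-gradient dynamics $\dot{\bar{x}}=D(\bar{x})$ at the fixed point $x^*=\rm{Q}$ directly from the Jacobian already assembled in Eq.~(\ref{eq:2.24}). Since this is a planar linear system, I would invoke the standard trace--determinant criterion: the equilibrium is asymptotically stable exactly when $\operatorname{tr}J<0$ and $\det J>0$. The whole proof then amounts to showing that the determinant condition holds unconditionally, so that stability is governed solely by the trace, and that the trace condition is equivalent to $\textrm{Cov}(X)<0$.

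First I would compute the determinant. Writing the matrix in Eq.~(\ref{eq:2.24}) as $\bigl(\begin{smallmatrix} 0 & b \\ c & d \end{smallmatrix}\bigr)$ with $b=p_{\rm{Q}}F'(1-p_{\rm{Q}})$ and $c=(r-1)(1-(1-p_{\rm{Q}})^{N-1})$, the determinant equals $-bc$ and is manifestly independent of $\textrm{Cov}(X)$. To fix its sign I would use the facts already established at $\rm{Q}$: since $1-p_{\rm{Q}}=z_{\rm{Q}}$ is the interior root of $F$ with $F'(z_{\rm{Q}})<0$, we have $b<0$; and because $r>1$, $N\ge2$, and $0<1-p_{\rm{Q}}<1$, we have $c>0$. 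Hence $-bc>0$, so the determinant condition is automatically satisfied and contributes nothing to the bifurcation; the entire question collapses onto the sign of the trace.

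The trace equals the single diagonal entry $d=\textrm{Cov}(X)\bigl(-F'(1-p_{\rm{Q}})/p_{\rm{Q}}\bigr)$. Because $F'(1-p_{\rm{Q}})<0$ and $p_{\rm{Q}}>0$, the bracketed factor is strictly positive, so $\operatorname{tr}J$ carries the same sign as $\textrm{Cov}(X)$. Combining this with the determinant computation, both eigenvalues have negative real parts precisely when $\textrm{Cov}(X)<0$, which is the claimed equivalence.

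I expect the only genuine obstacle to be disciplined sign bookkeeping rather than any delicate estimate, since every required inequality ($F'(z_{\rm{Q}})<0$, $p_{\rm{Q}}\in(0,1)$, and the positivity of $1-(1-p_{\rm{Q}})^{N-1}$) is inherited from the monomorphic analysis in Sect.~4.1. The one point worth flagging explicitly is the degenerate case $\textrm{Cov}(X)=0$: there the trace vanishes and the eigenvalues are purely imaginary, so $\rm{Q}$ reverts to a neutral center, consistent with the earlier finding that $\rm{Q}$ is a non-asymptotic center for monomorphic residents. This boundary case confirms that the strict inequality $\textrm{Cov}(X)<0$ is exactly what upgrades the neutral center to an asymptotically stable one, and that positive covariance destabilizes it, matching the collapse mechanism reported in the simulations.
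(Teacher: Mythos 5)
Your proposal is correct and follows essentially the same route as the paper's own proof: both read off $\det J>0$ from the opposite signs of the off-diagonal entries (using $F'(1-p_{\rm{Q}})<0$ and $0<p_{\rm{Q}}<1$) and reduce stability to the sign of the trace, which equals the sign of $\textrm{Cov}(X)$. Your explicit remark on the degenerate case $\textrm{Cov}(X)=0$ is a small but sensible addition not spelled out in the paper.
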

\begin{proof}
Since $0<p_{\rm{Q}}<1$ and $F'(1-p_{\rm{Q}})<0$, 
the off-diagonal elements of $J |_{x^*=\rm{Q}}$ are of opposite sign and the remaining diagonal element has the same sign as $\textrm{Cov}(X)$.
Thus, we have that $\det J |_{x^*=\rm{Q}} > 0$ and that the sign of tr$J |_{x^*=\rm{Q}}$ is the same as $\textrm{Cov}(X)$.
The stability of the representative strategy at Q, $x^*=\rm{Q}$, is extensively analyzed, as follows: 
if $\textrm{Cov}(X)>0$, then tr$J |_{x^*=\rm{Q}} > 0$ and $\det J |_{x^*=\rm{Q}} > 0$, and hence, Q is unstable; 
otherwise, if $\textrm{Cov}(X)<0$, then tr$J |_{x^*=\rm{Q}} < 0$ and $\det J |_{x^*=\rm{Q}} > 0$, and hence, Q is stable. \qed
\end{proof}

In the individual-based simulations the symmetry of distribution is hardly maintained when considering mutations. 
After the symmetric distribution breaks, if $\textrm{Cov}(X)<0$, the representative strategy $x^*$ would stay at Q. Since $x^*=\rm{Q}$ leads to a situation where any mutant can invade the population by neutral drift, the population is likely to continue spreading as long as the negative covariance is maintained. This process would lead the distribution to evolve into a diagonally spread form with a negative slope, actually as is demonstrated by the numerical investigations. If $\textrm{Cov}(X)>0$, then $x^*$ will go away from Q. When $x^* \neq \rm{Q}$, cloud-like populations diversified through neutral drift would shrink into a shape determined such that $S(X,x_i)=S(x^*,x_i)=0$ for any patch $x_i$ of $X$.

\section{Discussion}

We investigate a model for understanding the evolution of cooperation among individuals in free and anonymous societies.
It is well known in discrete-strategy models that voluntary participation plays an important role in maintaining the level of cooperation \cite{HauEtal02a,HauEtal02b,SemEtal03,Akt04,MatBoy09,seki,IzqEtal10,CasTor10,XuEtal10,ZhongEtal13,Wu13} as well as in promoting the evolution of costly punishment for collective actions \cite{Fow05,BraEtal06,HauEtal07,SigEtal10,SigEtal11,SasEtal12,GarTra12,Sa13} (see also \cite{RanNow11}).
There are a handful of prior theoretical studies about the co-evolution of continuous investment in public good games and other continuous properties, such as costly punishment (severity, strictness, tolerance, etc.) \cite{NaDi09,ShiNa13}: in particular Shimao and Nakamaru \cite{ShiNa13} considered the evolution of graduated punishment in well-mixed populations and demonstrated the possibility of strategy diversification.
Yet, little has been known about the consequences of continuously varying participation probability. We have analyzed gradual co-evoution of participation probability in the continuous public good game and shown that the evolutionary dynamics is characterised three phases: osciallation, convergence, and unchecked strategy-diversficiation and collapse to a single strategy. 
 
Despite being a simple model set in well-mixed populsations, the continuous voluntary public good game is capable of a wide variety of evolutionary scenarios, roughly classified into oscillations (rock-scissors-paper cycle) and dispersal (emergence of the specific correlation among the traits).
We fully analyze the adaptive dynamics for monomorphic populations for the model. We remark that the dynamics obtained are qualitatively different of those for discrete voluntary public good games \cite{HauEtal02a,HauEtal02b} (see also \cite{Tra09} for effects of small populations in the discrete games). 
In the discrete case oscillations occur in population compositions (fractions of cooperators, defectors, and nonparticipants). However, in the continuous case oscillations occur in individual traits (levels of cooperation and participation if individuals).
The cyclic movement along periodical closed orbits could be described as in a rock-scissors-paper cycle: if most strategies are cooperative, it would be profitable to reduce investment; if more defective strategies are prevalent, it would be better to more frequently exit; if most individuals tend to be nonparticipants at most times, then small interaction groups form most frequently, in which case increasing in the cooperation and participation levels is more beneficial than otherwise.
We note that the theoretical results, depicted in Fig. \ref{fig:1}, continue to hold even if a single participant would benefit from the good provision, instead of the loner's payoff $\sigma$. See Appendix A.4 for details.

We have analyzed the unfolding evolutionary dynamics using the theory of adaptive dynamics. This worked well for understanding the initial oscillatory phase in which individual levels of cooperation and participation gradually converged to a singular center Q. At this point, the invasion fitness vanish entirely and all mutant strategies have the same expected payoff as the resident strategy. This can be understood as an instance of the Bishop-Cannings theorem \cite{BiCa} in evolutionary game theory. To understand the dispersal phase that ensues, we established a condition for mutual invasibility of nearby strategies and showed that strategy diversification can continue as long as the covariance of individual cooperation and participation levels is negative. While the process of diversification may continue for quite some time, random drift eventually results in imbalances and the manifestation of positive covariance. At this point, the cloud of coexisting strategies unexpectedly collapses and the process begins anew.

\begin{acknowledgements}
We thank Daniel Bisig, Ulf Dieckmann, Josef Hofbauer, Yoshiaki Kaneko, Karl Sigmund, and Kazuki Yamamoto as well as the referees and Editors. T.S. acknowledges support by the Foundational Questions in Evolutionary Biology Fund: RFP-12-21 and the Austrian Science Fund (FWF): P27018-G11. 
\end{acknowledgements}

\appendix

\renewcommand{\theequation}{A.\arabic{equation}}
\setcounter{equation}{0}

\section*{Appendix A}
 
\subsection*{A.1 \ Invasion fitness}
Here we first calculate the expected payoff of a mutant within a monomorphic or polymorphic resident population with finite support. We then introduce the invasion fitness of a mutant extended to being with a probability distribution.

\subsubsection*{For monomorphic populations}

The probability that the rare participating mutant (we assume, of the infinitesimally small frequency) finds no co-participant and $N-1$ nonparticipants from the resident population with $(c_x,p_x)$ is $(1-p_x)^{N-1}$. In the case by assumption the public good games does not hold and the single participant only earns $\sigma$. Or otherwise, the mutant player should find from the resident population $S -1$ co-participants and $N-S$ nonparticipants with $2 \le S \le N$, with probability $\binom{N - 1}{S - 1} p_x^{S-1} (1-p_x)^{N-S}$. 
Hence, the mutant will obtain the net benefit through the participation, $\frac{r}{S}((S-1)c_x+c_y)-c_y$. The payoff of the rare participating mutant is 
\begin{eqnarray}
\label{eq:2.3}
g(x,y) &=& (1-p_x)^{N-1} \sigma 
+ \sum_{S=2}^{N}{\binom{N\!-\!1}{S\!-\!1} p_x^{S-1} (1-p_x)^{N-S} \left[ \frac{r((S-1)c_x+c_y)}{S}-c_y \right] } \nonumber \\
&=& (1-p_x)^{N-1} \sigma  
+ \sum_{S=2}^{N}{\binom{N\!-\!1}{S\!-\!1} \frac{1}{S} p_x^{S-1} (1-p_x)^{N-S}  r (c_y - c_x) }  \nonumber \\
&& + \sum_{S=2}^{N}{\binom{N\!-\!1}{S\!-\!1} p_x^{S-1} (1-p_x)^{N-S} ( r c_x - c_y ) } \nonumber \\
&=& (1-p_x)^{N-1} \sigma 
+  \frac{1}{N} \sum_{S=2}^{N}{\binom{N}{S} p_x^{S-1} (1-p_x)^{N-S} r (c_y - c_x) }  \nonumber \\
&& + \sum_{S'=1}^{N-1}{\binom{N\!-\!1}{S'} p_x^{S'} (1-p_x)^{(N-1)-S'} ( r c_x - c_y ) } \nonumber \\
&=& (1-p_x)^{N-1} \sigma
+ \left( \frac{1- (1-p_x)^{N}}{N p_x} - (1-p_x)^{N-1} \right) r (c_x - c_y)  \nonumber \\
&& + (1-(1-p_x)^{N-1})  (r c_x - c_y)  \nonumber \\
&=& (1-p_x)^{N-1} \sigma + (1-(1-p_x)^{N-1}) (r-1)c_x - (c_y - c_x) F(1-p_x),
\end{eqnarray}
where $\binom{N\!-\!1}{S\!-\!1} \frac{1}{S} = \binom{N}{S} \frac{1}{N}$ and $F(z)$ is as in Eq. (\ref{eq:2.4}). This yields the mutant's payoff $P(x,y)$ in Eq. (\ref{eq:2.5}) and the invasion fitness $S(x,y)$ in Eq. (\ref{eq:2.7}).

\subsubsection*{For polymorphic populations}

We consider a polymorphic resident population consisting of different $K$ patches ($1 \le K < \infty$), denoted as $X=\{ (x_i, n_i) \}_{i=1,\cdots, K}$,
where $x_i$ represents a two-dimensional strategy given by $x_i=(c_{xi},p_{xi})$ and $n_i$ represents the relative frequency of strategy $x_i$.
Thus, $n_i \ge 0$ and $\sum^{K}_{i=1} n_i = 1$. 
Let us, in general, denote by $P(X,y)$ the expected payoff of an individual with strategy $y=(c_y,p_y)$ interacting with other $N-1$ individuals randomly selected from the polymorphic resident population $X$.
As well, $g(X,y)$ denotes the individual's expected payoff when the individual participates with probability $p_y$.
Hence,
\begin{equation}
\label{eq:A.1}
P(X,y)=p_y g(X,y) + (1-p_y) \sigma,
\end{equation}
As well as in the case of monomorphic populations, thus, the invasion fitness for the rare mutant within the polymorphic resident population is defined as
\begin{equation}
\label{eq:A.1}
S(X,y)=P(X,y)-\bar{P}(X),
\end{equation} 
in which $\bar{P}(X)$ describes the average payoff over the polymorphic resident population, given by $\sum^{K}_{i=1} n_i P(X,x_i)$. 

We next estimate the probability that the focal participant with $y=(c_y,p_y)$ finds from the resident population $S -1$ co-participants and $N-S$ nonparticipants.
In the case $S=1$ (no resident participates), the focal participant obtains the payoff $\sigma$ by assumption.
This happens with probability $(\sum^{K}_{i=1} n_i (1-p_{xi}))^{N-1}=(1-\bar{p}_x)^{N-1}$, where $\bar{p}_x:=\sum^K_{i=1} n_i p_{xi}$.
Meanwhile, in the case $S>1$, the event probability of that is given by
\begin{equation}
\label{eq:A.2}
\binom{N\!-\!1}{S\!-\!1} 
\cdot \!
\underbrace{
\sum_{\substack{s_1+\cdots+s_K\\=S-1,\\s_i\ge0}}
\! \binom{S\!-\!1}{s_1,\! \cdots \!,s_K} 
\! \prod^{K}_{j=1}(n_jp_{xj})^{s_j}
}_{\boxed{\text{a}}}
\cdot \!
\underbrace{
\sum_{\substack{t_1+\cdots+t_K\\=N-S,\\t_k\ge0}}
\! \binom{N\!-\!S}{t_1,\! \cdots \!,t_K} 
\! \prod^{K}_{l=1}(n_l(1-p_{xl}))^{t_l}
}_{\boxed{\text{b}}},
\end{equation}
where the terms $\boxed{\text{a}}$ and $\boxed{\text{b}}$ denote the probabilistic sum of combination of $K$-type strategies, respectively, among $S-1$ participants and among remaining $N-S$ nonparticipants.
With the combination of participants in Eq. (\ref{eq:A.2}), 
$(s_1, \cdots, s_K)$, the focal participant earns 
\begin{equation}
\label{eq:A.3}
\frac{r}{S} \left( \sum^{K}_{i=1} s_i c_{xi} + c_y \right) - c_y
= \frac{r}{S} \sum^{K}_{i=1} s_i c_{xi} + c_y \left( \frac{r}{S} -1\right).
\end{equation}
Considering the event probability, thus, the focal participant's expected payoff is
\begin{eqnarray}
\label{eq:A.4}
&& \binom{N-1}{S-1}
\boxed{\text{a}} \
\boxed{\text{b}}
\! \left[ \frac{r}{S} \! \sum^{K}_{k=1} s_k c_{xk} + c_y \left( \frac{r}{S}-1\right) \! \right]
\nonumber \\
&=&
\binom{N-1}{S-1}
\boxed{\text{b}}
\Biggl[
\boxed{\text{a}} \; c_y \left( \frac{r}{S}-1\right)
 +
\frac{r}{S} \sum^{K}_{k=1}
\underbrace{
\sum_{\substack{s_1+\cdots+s_K\\=S-1,\\s_i \ge 0}}
\! \binom{S-1}{s_1, \cdots ,s_K}
\! \left( \prod^{K}_{j=1}(n_jp_{xj})^{s_j} \right)
s_k 
}_{\boxed{\text{c}}}
c_{xk} \Biggr]. \nonumber \\
&&
\end{eqnarray}
From the multinomial series expansion,
the terms $\boxed{\text{a}}$ and $\boxed{\text{b}}$ are identical to $(n_1p_{x1}+\cdots+n_Kp_{xK})^{S-1}=\bar{p}^{S-1}_x$
and $(n_1(1-p_{x1})+\cdots+n_K(1-p_{xK}))^{N-S}=(1-\bar{p}_x)^{N-S}$, respectively.
Since
$\binom{S\!-\!1}{s_1,\cdots,s_k,\cdots,s_K} s_k = \binom{S\!-\!2}{s_1,\cdots,s_k\!-\!1,\cdots,s_K} (S\!-\!1)$
for $1\leq \forall{k} \leq K$,
the term $\boxed{\text{c}}$ in Eq. (\ref{eq:A.4}) can be rewritten as
\begin{eqnarray}
\label{eq:A.5}
&& 
\sum_{\substack{s_1+\cdots+s_K\\=S-1,\\s_i\ge0}}
\! \binom{S\!-\!1}{s_1,\! \cdots \!,s_k,\! \cdots \!,s_K}
\! \left( \prod^{K}_{j=1}(n_jp_{xj})^{s_j} \right)
s_k 
\nonumber \\
&=& 
(S\!-\!1) (n_kp_{xk})\!\!\!\!
\sum_{\substack{s_1+\cdots+s'_k+\\ \cdots+s_K=S-2,\\s_i\ge0, s'_k\ge0}}
\! \binom{S\!-\!2}{s_1,\! \cdots \!,s'_k,\! \cdots \!,s_K}
\! (n_1p_{x1})^{s_1}\cdots(n_kp_{xk})^{s'_k}\cdots(n_Kp_{xK})^{s_K}
\nonumber \\
&& \text{(where} \  s'_k := s_k -1 \ \text{)}
\nonumber \\
&=&
(S\!-\!1) (n_kp_{xk}) (n_1p_{x1}+\cdots+n_Kp_{xK})^{S-2}
\nonumber \\
&\!=\!&
(S\!-\!1) (n_kp_{xk})\bar{p}^{S-2}_x.
\end{eqnarray}
Thus, we obtain the analytical expression of $g(X,y)$, as follows:
\begin{eqnarray} \label{eq:A.6} 
g(X,y) &=&
(1-\bar{p}_x)^{N-1} \sigma +  \sum_{S=2}^{N} 
\binom{N-1}{S-1}
(1-\bar{p}_x)^{N-S}
\left[
\bar{p}^{S-1}_x c_y \left( \frac{r}{S}-1 \right)
+ \bar{p}^{S-2}_x \frac{r(S-1)}{S} \sum^{K}_{i=1} n_ip_{xi}c_{xi}
\right]
\nonumber \\
&=& 
(1-\bar{p}_x)^{N-1} \sigma +  \sum_{S=2}^{N} 
\binom{N-1}{S-1}
\bar{p}^{S-1}_x
(1-\bar{p}_x)^{N-S}
\left[
 c_y \left( \frac{r}{S}-1 \right)
+ \frac{r(S-1)}{S} \sum^{K}_{i=1} \frac{n_ip_{xi}}{\bar{p}_x} c_{xi}
\right]
\nonumber \\
&=& 
(1-\bar{p}_x)^{N-1} \sigma +  \sum_{S=2}^{N} 
\binom{N-1}{S-1}
\bar{p}^{S-1}_x
(1-\bar{p}_x)^{N-S}
\left[
\frac{r}{S} \left( \! (S-1)\sum^{K}_{i=1} \frac{n_ip_{xi}}{\bar{p}_x} c_{xi} \!+\! c_y \! \right) \!-\! c_y
\right].
\end{eqnarray}

We note that Eq. (\ref{eq:A.6}) can be applied to the case that the rate mutant $y=(c_y,p_y)$ has a small yet finite mass $\epsilon>0$, with considering the mutant's patch $(x_{K+1},n_{K+1})=(y,\epsilon)$ in the resident population of $K+1$ patches, $X=\{ (x_i, n_i) \}_{i=1,\cdots, K+1}$ with $K \ge 1$. 
It is clear that as $\epsilon \to 0$, we can recover Eq. (\ref{eq:A.6}) for the rare mutant with the infinitesimally small mass and the resident with $K$ patches; in particular Eq. (\ref{eq:2.3}) for $K=1$.  

\subsubsection*{Mapping polymorphism to monomorphism}
This yields a natural mapping from polymorphic populations to monomorphic populations, such that it keeps the expected payoff of a mutant unchanged.
Let us put
\begin{eqnarray}
\label{eq:A.7}
c_{x^*}&:=& \sum^{K}_{i=1} \frac{n_ip_{xi}}{\bar{p}_x} c_{xi}, \\
\label{eq:A.8}
p_{x^*}&:=& \bar{p}_x,
\end{eqnarray}
which define a mapping from the set of polymorphic populations with finite support to the set of monomorphic populations, as follows:
\begin{equation}
\label{eq:A.9}
\pi:X \longmapsto x^*:=(c_{x^*},p_{x^*}).
\end{equation}
Using this, we represent $g(X,y)$ in Eq. (\ref{eq:A.6}), as follows:
\begin{eqnarray}
\label{eq:A.10}
g(X,y) &=&
  (1- p_{x^*})^{N-1} \sigma + \sum_{S=2}^{N} \binom{N-1}{S-1} {p_{x^*}}^{S-1} (1-p_{x^*})^{N-S} 
 \left[ \frac{r ((S-1)c_{x^*} + c_y)}{S} - c_y \right]
 \nonumber \\
&=&
  (1- p_{x^*})^{N-1} \sigma + (1-(1-p_{x^*})^{N-1}) (r-1)c_{x^*} - (c_y-c_{x^*}) F(1-p_{x^*}) ,
\end{eqnarray}
where $F(z)$ is in Eq. (\ref{eq:2.4}).
In particular, if a resident population $X$ is monomorphic with strategy $x$, Eq. (\ref{eq:A.10}) is consistent with the former Eq. (\ref{eq:2.3}), that is, $g(X,y)=g(x^*,y)$ and thus $P(X,y)=P(x^*,y)$.
Therefore, in this continuous-strategy game, the expected payoff of a mutant within any polymorphic residents with finite support can be calculated  by using the mapping $\pi$ in Eq. (\ref{eq:A.9}).
Let us clarify the difference between the strategy $x^*=(c_{x^*},p_{x^*})$ and the average strategy $\bar{x}=(\bar{c}_x,\bar{p}_x)$ given by $(\sum^K_{i=1}n_i c_{xi}, \sum^K_{i=1}n_i p_{xi}$).
Obviously, $p_{x^*}=\bar{p}_x$, yet
\begin{equation}
\label{eq:A.12}
c_{x^*}-\bar{c}_x=\frac{\sum^K_{i=1} n_i n_j (c_{xi}-c_{xj})(p_{xi}-p_{xj})}{\sum^K_{i=1}n_i p_{xi}}
= \frac{Cov(c_x,p_x)}{\bar{p}_x}.
\end{equation}

\subsubsection*{Extend to polymorphic mutants}

Moreover, we extend the expected payoff to a mutant with probability distribution with finite support, as resident populations. 
Let us represent the mutant's distribution as $Y=\{ (y_i, m_i) \}_{1 \leq i \leq K'}$, where $y_i=(c_{yi},p_{yi})$ and $m_i$ is the relative frequency of strategy $y_i$ ($1 \leq i \leq K'$).
We define the expected payoff of a mutant with distribution $Y$ in a resident population with distribution $X$, by weighted sum of the expected payoff of $Y$'s each patch, as follows:    
\begin{eqnarray}
\label{eq:A.13}
P(X,Y):=\sum^{K'}_{i=1} m_i P(X,y_i)=\sum^{K'}_{i=1} m_i P(x^*,y_i).
\end{eqnarray}
From Eq. (\ref{eq:A.10}),
\begin{eqnarray}
\label{eq:A.14}
&& \sum^{K'}_{i=1} m_i p_{yi} g(X,y_i) \nonumber \\
&&= \left( {\sum^{K'}_{i=1} m_i p_{yi}} \right) \!\!
\left[(1\!-\! p_{x^*})^{N-1} \sigma + (1\!-\!(1\!-\!p_{x^*})^{N-1}) (r\!-\!1)c_{x^*} 
- \left( \frac{\sum^{K'}_{i=1}m_i p_{yi} c_{yi}}{\sum^{K'}_{i=1} m_i p_{yi}}-c_{x^*} \right) \! F(1\!-\!p_{x^*})  
\right] \nonumber \\
&& = p_{y^*} g(X,y^*),
\end{eqnarray}
where $y^*= (c_{y^*},p_{y^*})$ is given by $\pi(Y)=(\sum^{K'}_{i=1} m_i p_{yi}, 
\sum^{K'}_{i=1}m_i p_{yi} c_{yi} / \sum^{K'}_{i=1} m_i p_{yi} )$.
Thus, we obtain
\begin{eqnarray}
\label{eq:A.15}
\sum^{K'}_{i=1} m_i P(x^*,y_i)&=&\sum^{K'}_{i=1} m_i (p_{yi} g(X,y_i) + (1-p_{yi}) \sigma) \nonumber \\
&=& p_{y^*} g(X,y^*) + (1-p_{y^*}) \sigma \nonumber \\
&=& P(x^*,y^*), 
\end{eqnarray}
and then,
\begin{equation}
\label{eq:A.16}
P(X,Y)=P(\pi(X),\pi(Y)).
\end{equation}

Finally, we define invasion fitness in the case that the mutant and the resident population have strategy distributions with finite support, respectively $Y$ and $X$. Similarly, invasion fitness is given by
\begin{equation}
\label{eq:A.17}
S(X,Y) = P(X,Y)-P(X,X),
\end{equation}
Substituting Eqs. (\ref{eq:A.1}) and (\ref{eq:A.9}) into Eq. (\ref{eq:A.17}), we can have the Taylor expansion around $x^*=(c_{x^*},p_{x^*})$: 
\begin{eqnarray}
\label{eq:A.18}
S(X,Y)
&=& -p_{x^*} F(z_{x^*})(c_{y^*} - c_{x^*}) + ((r - 1)c_{x^*} - \sigma) (1 - (1 - p_{x^*})^{N-1}) (p_{y^*} - p_{x^*}) \nonumber \\
&&- F(z_{x^*})(c_{y^*} - c_{x^*})(p_{y^*} - p_{x^*}).
\end{eqnarray} 

\subsection*{A.2 \ Curvatures}

At a given focal point $x$, the curves $C_m$ and $C_r$ have identical slopes.
To determine the local arrangement of $C_m$ and $C_r$ around $x$, thus we need to calculate those second derivatives, that is, curvatures.
In the case of $C_m$, we have
\begin{eqnarray}
\label{eq:A.19}
p''_y(x) 
=\left. \frac{\partial^2 p_y}{\partial c^2_y} \right|_{y=x}
= 2p_x \left( \frac{a_{11}(x)}{a_{01}(x)} \right)^2,
\end{eqnarray}
and then the curvature of $C_m$ at $x$ is given by $\kappa_m(x) = p''_y(x) [1+ p'_y(x)]^{-3/2}$,
where $p'_y(x) = \partial p_y / \partial c_y |_{y=x} = -p_x a_{11}(x)/a_{01}(x)$. 
In the case of $C_r$, we have
\begin{eqnarray}
\label{eq:A.20}
p''_y(x) = 2p_x \frac{a_{11}(x)}{a_{01}^2(x)} \left( a_{01}'(x) + p_x F'(1-p_x) \right),
\end{eqnarray}
where $a_{01}'(x) = \left. \partial a_{01}(y) / \partial c_y \right|_{y=x} $. 
We can then calculate the curvature $\kappa_r(x)$ of $C_r$ at $x$, as well as $\kappa_m(x)$.

Let us investigate the signs of $\kappa_m(x)$, $\kappa_r(x)$, and $\kappa_r(x)-\kappa_m(x)$ given by
\begin{equation}
\label{eq:A.21}
\frac{2p_x}{ [1+ p'_y(x)]^{3/2} } \left( \frac{a_{11}(x)}{a_{01}(x)} \right)
\left( a_{01}'(x) + p_x F'(1-p_x) - a_{11}(x) \right)
\end{equation}
From Eq. (\ref{eq:A.19}), the sign of $\kappa_m(x)$ is always positive. As for the sign of $\kappa_r(x)$, we should have a further calculation in the Eq. (\ref{eq:A.20}), as follows: putting $z_x=1-p_x$,
\begin{eqnarray}
\label{eq:A.22}
&& a_{01}'(x) + (1-z_x) F'(z_x) \nonumber \\
&& = - (1 -z_x^{N-1}) + r \left(
1-\frac{1-z_x^N}{N(1-z_x)}
\right) \left(
1+\frac{(N-1)(1-z_x)z_x^{N-2}}{1-z_x^{N-1}}
\right) \nonumber \\
&& \geq \left[
- (1\!-\!z_x^{N-1}) \!+\! 2 \! \left( 1\!-\! \frac{1-z_x^N}{N(1\!-\!z_x)} \right) 
\right] 
\!+ r \! \left(
1\!-\! \frac{1\!-\!z_x^N}{N(1\!-\!z_x)}
\right) \!\! \left(
1\!+\! \frac{(N\!-\!1)(1\!-\!z_x)z_x^{N-2}}{1\!-\!z_x^{N-1}}
\right) \! , \nonumber \\
\end{eqnarray}  
where all the terms are positive if $z_x \neq 1$, and otherwise zero. 
We note that
\begin{eqnarray}
\label{eq:A.23}
&& - (1 -z_x^{N-1}) + 2 \left( 1-\frac{1-z_x^N}{N(1-z_x)} \right) \nonumber \\
&& = (1-z_x)^2 \left[
(N-2)(1+\cdots+z_x^{N-3}) + \cdots + (N-2k)(z_x^{k-1}+\cdots+z_x^{(N-2)-k}) \right. \nonumber \\
&& \qquad + \ \cdots +
\begin{cases}  
\left. 2 (z_x^{\frac{N}{2}-2} + z_x^{\frac{N}{2}}) \right]  \qquad \text{if $N$ is even,} \nonumber \\
\left. z^{\frac{N-3}{2}} \right]  \qquad\qquad\qquad \text{if $N$ is odd,} 
\end{cases} \nonumber \\
&& \geq 0
\end{eqnarray}
Thus, the sign of $\kappa_r(x)$ is equal to the sign of $a_{11}(x)$. 
As for the sign of $\kappa_r(x)-\kappa_m(x)$, from Eqs. (\ref{eq:A.19}) and (\ref{eq:A.20}), we need to know the sign of the term as follows:
\begin{eqnarray}
\label{eq:A.24}
&& a_{01}'(x) + (1-z_x) F'(z_x) - a_{11}(x) \nonumber \\
&& = r \left[
\left( \!
- (1 \!-\! z_x^{N-1}) \!+\! 2 \! \left( 1\! - \! \frac{1-z_x^N}{N(1\!-\!z_x)} \right)
\! \right) 
\!+\!
\frac{(N\!-\!1)(1\!-\!z_x)z_x^{N-2}}{1-z_x^{N-1}}
\left( 1\!-\! \frac{1-z_x^N}{N(1\!-\!z_x)} \right)
\right] \nonumber \\
&&
\end{eqnarray}
From Eq. (\ref{eq:A.23}), the above is positive if $z_x \neq 1$, and otherwise, zero.
Therefore, the sign of $\kappa_r(x)-\kappa_m(x)$ is equal to the sign of $a_{11}(x)$, as well as $\kappa_r(x)$.

\subsection*{A.3 \ Replicator dynamics for two strategies}

We analyze frequency dynamics between two strategies with $x_1=(c_1, p_1)$ and $x_2=(c_2, p_2)$. We assume that the growth rate of strategy is determined by the replicator dynamics.  
It has already been studied in a special case of $c_1=0$ and $c_2=1$, that is, consisting of full defection and full cooperation which are extended to be with probabilities to participate in the public good game \cite{SasEtal07}. The replicator dynamics in the special case have been classified into four fundamental types of evolutionary scenario in two-strategy games, given by dominance, coexistence, bi-stability, and neutrality.   
Here we present a general classification for two arbitrary strategies $x_1$ and $x_2$ in the strategy space $U$. It then turns out that similarly, the evolutionary scenario can be determined by the combination of signs, given by $(\text{sgn}\:S(x_2,x_1),\text{sgn}\:S(x_1,x_2))$. 

Let us denote  as $X_h$ a mixed state between $x_1$ and $x_2$, with relative frequencies $h$ and $h-1$ respectively.
To investigate the evolutionary fate of such the dimorphic population, we should know the payoff difference between  $x_1$ and $x_2$ in the environment set by the mixed state $X_h$, that is, the advantage function given by
\begin{equation}
\label{eq:A.25}
P(X_h,x_1)-P(X_h,x_2)=:\tilde{F}_{12}(h),
\end{equation}
where $P(X_h,x_i)$ denotes the expected payoff of strategy $x_i$ ($i=1,2$) within the mixed population $X_h$.
The average payoff over the population is given by $P(X_h,X_h) = h P(X_h,x_1) + (1-h) P(X_h,x_2)$.
Using the relative average payoff $S(X_h,x_i):=P(X_h,x_i)-P(X_h,X_h)$ ($i=1,2$), we obtain 
\begin{equation}
\label{eq:A.26}
\tilde{F}_{12}(h)=S(X_h,x_1)-S(X_h,x_2), 
\end{equation}
in particular,
\begin{equation}
\label{eq:A.27}
\tilde{F}_{12}(0)=S(x_2,x_1), \quad \tilde{F}_{12}(1)=-S(x_1,x_2).
\end{equation}
The replicator equation is given by
\begin{eqnarray}
\label{eq:A.28}
\dot{h} &=& h (P(X_h,x_1)-P(X_h,X_h)) = h S(X_h,x_1) \nonumber \\
&=& h(1-h) (P(X_h,x_1)-P(X_h,x_2)) \nonumber \\
&=& h(1-h) \tilde{F}_{12}(h).
\end{eqnarray}  
From a straightforward calculation, $\tilde{F}_{12}(h)$ can be rewritten as
\begin{eqnarray}
\label{eq:A.29}
\tilde{F}_{12}(h) &=&
\underbrace{ [ (c_1-c_{\textrm{Q}})p_1 - (c_2-c_{\textrm{Q}})p_2 ] }_{\text{(A)}}
\underbrace{(r-1)(1-z^{N-1}_h)}_{\text{(Z1)}}
 + \underbrace{(c_2-c_1)p_1p_2}_{\text{(B)}}
\underbrace{
\frac{r}{1-z_h} \left(
1-\frac{1-z^N_h}{N(1-z_h)}
\right)
}_{\text{(Z2)}},  \nonumber \\
&&
\end{eqnarray}
 where $z_h = 1-p_h = 1-(hp_1+(1-h)p_2)$ and $c_{\textrm{Q}}=\frac{\sigma}{r-1}$. 
In particular, if $p_1=p_2 \ (=:p)$, Eq. (\ref{eq:A.29}) is reduced, as follows:
\begin{equation}
\label{eq:A.31}
\tilde{F}_{12}(h) =
(c_2-c_1)p  \left[
-(r-1)(1-z^{N-1})+r\left( 1-\frac{1-z^N}{N(1-z)} \right)
\right] = (c_2-c_1)p \cdot F(z),
\end{equation}
where $z=1-p$ which is independent of $h$.
In the reduced case, thus $\tilde{F}_{12}(h)$ must constant and the replicator dynamics is unilateral toward either of the two extreme states.
Which one is a global attractor depends on two kinds of magnitude relation: $c_1$ and $c_2$; and, if $r>2$, also $p$ and $p_{\textrm{Q}}$, where $p_{\textrm{Q}}$ is  the unique interior root of $F(1-p_{\textrm{Q}})$. If $p=0$, the dynamics is neutral.   
In the following section we consider the general case: $p_1 \neq p_2$.

We first consider this general case that \textit{both (A) and (B) are non-zero}.
 If the boundary values $\tilde{F}_{12}(0)$ and $\tilde{F}_{12}(1)$ have signs that are opposite to each other 
($\iff$ $S(x_2, x_1)$ and $S(x_1,x_2)$ have same signs, from Eq. (\ref{eq:A.27})), 
$\tilde{F}_{12}(h)$ has to have at least one interior root within $[0,1]$. In the case we show that $\tilde{F}_{12}(h)$ is monotonic and thus the interior root is unique, as follows.
Since there exists $h$ such that $\text{Eq. (\ref{eq:A.28})} =0$ holds, the term (A) $<0$, if the term (B) $>0$ ($\Leftrightarrow c_2>c_1$), or the term (A) $>0$, if the term (B) $<0$ ($\Leftrightarrow c_2<c_1$).
Differentiate $\tilde{F}_{12}(h)$ with respect to $h$,
\begin{eqnarray}
\label{eq:A.32}
\frac{d \tilde{F}_{12}}{d h} 
= \frac{d \tilde{F}_{12}}{dz_h} \frac{d z_h}{d h}
&=& \left[
\underbrace{ ( (c_1-c_{\textrm{Q}})p_1 - (c_2-c_{\textrm{Q}})p_2 ) }_{\text{(A)}} 
\underbrace{ \cdot -(r-1)(N-1)z^{N-2}_h }_{\text{(Z1')}} 
\right. \nonumber \\
&& +  \left.
\underbrace{ (c_2-c_1)p_1p_2 }_{\text{(B)}} 
\underbrace{
\frac{r}{N} ( (N\!-\!2)\!+\!2(N\!-\!3)z_h \!+\! \cdots \!+\! (N\!-\!2)z^{N-3} )
}_{\text{(Z2')}} 
\right] \! (p_2-p_1) . \nonumber \\
\
\end{eqnarray}
Since the term (Z1') $<0$ and the term (Z2') $>0$ for all $t \in (0,1)$,
\begin{eqnarray}
\label{eq:A.34}
\frac{d \tilde{F}_{12}}{d h}  
  \begin{cases}
  >0 \quad \text{ if $(c_1-c_2)(p_1-p_2)>0$ }, \\
  <0 \quad \text{ if $(c_1-c_2)(p_1-p_2)<0$ }.
  \end{cases}
\end{eqnarray}
Therefore, $\tilde{F}_{12}(h)$ is monotonic and the interior root is unique if it exists.
If $(\text{sgn}\:S(x_2,x_1),\text{sgn}\:S(x_1,x_2))$ is $(-,-)$, we have that $\tilde{F}_{12}(0)<0$ and $\tilde{F}_{12}(1)>0$ in Eq. (\ref{eq:A.27})), and the monotonicity leads to that $\tilde{F}_{12}(h)$ is increasing. Thus the interior fixed point exists and is a repellor (``bi-stability'').
If $(\text{sgn}\:S(x_2,x_1),\text{sgn}\:S(x_1,x_2))$ is $(+,+)$, similarly, $\tilde{F}_{12}(0)>0$, $\tilde{F}_{12}(1)<0$, and $\tilde{F}_{12}(h)$ is decreasing. Thus the interior fixed point exists and is an attractor (``coexistence'').
If $(\text{sgn}\:S(x_2,x_1),\text{sgn}\:S(x_1,x_2))$ is $(+,-)$, we have that both $\tilde{F}_{12}$ and $\tilde{F}_{12}(1)>0$.
Thus, $\tilde{F}_{12}(h)$ has no interior root and the uniform state with $x_1$ is a global attractor (``$x_1$-dominance'').
Finally, if $(\text{sgn}\:S(x_2,x_1),\text{sgn}\:S(x_1,x_2))$ is $(-,+)$, then that both $\tilde{F}_{12}$ and $\tilde{F}_{12}(1)<0$.
Similarly, thus there exists no interior root and the other uniform state with $x_s$ is a global attractor (``$x_s$-dominance'').

We turn to the boundary cases. 
Let us first assume that \textit{only (B) equals zero}. 
It follows that (B) $=0$ $\iff$ $c_1=c_2$, $p_1=0$ or $p_2=0$.  
In the case of $c_1=c_2 \ (=:c)$, $\tilde{F}_{12}(h)$ is reduced to
\begin{equation}
\label{eq:A.35}
\tilde{F}_{12}(h) =
(c-c_Q)(p_1-p_2)(r-1)(1-z^{N-1}_h).
\end{equation} 
The sign of $\tilde{F}_{12}(h)$ above is unchanged, and thus the direction of dynamics is unilateral.
A global attractor is determined by the sign of $(c-c_{\textrm{Q}})(p_1-p_2)$.
Similarly, in the case of $p_2=0$, if $c_1 > c_{\textrm{Q}}$, the dynamics is $x_1$-dominance, or otherwise, $x_2$-dominance;
in the case of $p_1=0$, if $c_2 > c_{\textrm{Q}}$, the dynamics is $x_2$-dominance, or otherwise, $x_1$-dominance.
If \textit{only (A) equals zero}, it follows from Eq. (\ref {eq:A.32}) that the dynamics is $x_1$-dominance if $c_2>c_1$, or otherwise, $x_2$-dominance.
Finally, if \textit{both (A) and (B) equal zero}, the dynamics is neutral.

\subsection*{A.4 \ Continuous voluntary public good games with stand-alone play}
We examine continuous public good game with stand-alone play.
We assume that a single participant with contribution level $c$ would receive $rc-c$ from the good provision, instead of the loner's payoff $\sigma$. 
In this case it is clear that if the contribution level is greater than $c_\textrm{Q}=\sigma/(r-1)$, nonparticipation is no longer individually rational: each point of the line $p=0$ is no longer a Nash equilibrium.
As in Sect. 2.2, the probability that a mutant player with strategy $y$ finds itself among the $S -1$ resident co-players with strategy $x$ is $\binom{N - 1}{S - 1} p_x^{S-1} (1-p_x)^{N-S}$, yet the number of players $S$ can vary between 1 and $N$. Thus, Eq. (\ref{eq:2.3}) turns into
\begin{eqnarray}
\label{eq:A.39}
g(x,y) 
&=& \sum_{S=1}^{N}{\binom{N\!-\!1}{S\!-\!1} p_x^{S-1} (1-p_x)^{N-S} \left[ \frac{r((S-1)c_x+c_y)}{S}-c_y \right] } \nonumber \\
&=&   (r-1)c_x - (c_y - c_x) F_0(1-p_x), 
\end{eqnarray}
where
$F_0 (z) = 1-r(1-z^N)/(N(1-z))$. 
We note that for $1<r<N$, $F_0 (z)$ is monotonically decreasing and has a unique root in the open interval $(0,1)$.
Hence, using $c_{\textrm{Q}}=\sigma/(r-1)$,
\begin{equation}
\label{eq:A.41}
P(x,y) = \sigma + (r-1)(c_x - c_{\textrm{Q}}) p_y - (c_y - c_x) p_y F_0 (1-p_x).
\end{equation}
Indeed, this yields a similar system, as follows:
$\dot{c} = -p F_0(1-p)$ and $\dot{p} = (r-1)(c-c_\textrm{Q})$.
in contrast to the original system, the variant system is so simple that has only two singular points, a center point and a boundary saddle point with $p=0$. This center point exists for all $r$ within $1<r<N$.



\end{document}